\documentclass[conference] {IEEEtran}


\usepackage[colorlinks, linkcolor=blue, citecolor=blue]{hyperref}           
\usepackage{color}
\usepackage{graphicx,subfigure,amsmath,amssymb,amsfonts,bm,epsfig,epsf,url,dsfont}
\usepackage{amsthm,mathrsfs}

\newtheorem{thm}{Theorem}[section]
\newtheorem{definition}[thm]{Definition}
\newtheorem{lem}[thm]{Lemma}

\newtheorem{proposition}[thm]{Proposition}
\newtheorem{remark}[thm]{Remark}
\newtheorem{corollary}[thm]{Corollary}

\usepackage{tikz}
\usepackage{bbm}
\usepackage[small,bf]{caption}
\usepackage[top=1in,bottom=1in,left=1in,right=1in]{geometry}
\usepackage{fancybox}
\usepackage{hyperref}
\usepackage{algorithm}
\usepackage{verbatim}
\usepackage{amsmath}
\usepackage{array}
\usepackage{tabularx}
\usepackage{booktabs}

\usepackage{mathtools}

\usepackage{scalerel,stackengine}
\stackMath
\newcommand\reallywidehat[1]{%
\savestack{\tmpbox}{\stretchto{%
  \scaleto{%
    \scalerel*[\widthof{\ensuremath{#1}}]{\kern-.6pt\bigwedge\kern-.6pt}%
    {\rule[-\textheight/2]{1ex}{\textheight}}
  }{\textheight}%
}{0.5ex}}%
\stackon[1pt]{#1}{\tmpbox}%
}

\newcommand*{\rom}[1]{\expandafter\@slowromancap\romannumeral #1@}

\newcommand{\abs}[1]{\left|#1\right|}
\newcommand{\R}{\mathbb{R}}

\DeclareMathOperator*{\argmax}{arg\,max}

\newcommand{\p}[1]{\left(#1\right)}
\newcommand{\pp}[1]{\left[#1\right]}
\newcommand{\ppp}[1]{\left\{#1\right\}}
\newcommand{\norm}[1]{\left\|#1\right\|}

\newcommand{\s}[1]{\mathsf{#1}}
\usepackage{xspace}

\numberwithin{equation}{section}
\newcommand\equalityAS{\mathrel{\stackrel{\makebox[0pt]{\mbox{\normalfont \scriptsize a.s.}}}{=}}}
\newcommand{\N}{N}
\newcommand{\n}{i}
\newcommand{\jj}{j}

\usepackage{setspace}
\captionsetup{belowskip=-15pt}
\captionsetup{aboveskip=2pt}

\IEEEoverridecommandlockouts 

\allowdisplaybreaks



\begin{document}
\title{A note on the sample complexity of multi-target detection}

\author{
    \IEEEauthorblockN{Amnon Balanov, Shay Kreymer, and Tamir Bendory}
    \thanks{The authors are with the School
of Electrical Engineering, Tel Aviv University, Israel.
    The research is supported by the BSF grant no. 2020159, the NSF-BSF grant no. 2019752, the ISF grant no. 1924/21, and a grant from The Center for AI and Data Science at Tel Aviv University (TAD).}
}

\maketitle
\begin{abstract}
This work studies the sample complexity of the multi-target detection (MTD) problem, which involves recovering a signal from a noisy measurement containing multiple instances of a target signal in unknown locations, each transformed by a random group element.
This problem is primarily motivated by single-particle cryo-electron microscopy (cryo-EM), a groundbreaking technology for determining the structures of biological molecules. 
We establish upper and lower bounds for various MTD models in the high-noise regime as a function of the group, the distribution over the group, and the arrangement of signal occurrences within the measurement.
The lower bounds are established through a reduction to the related multi-reference alignment problem, while the upper bounds are derived from explicit recovery algorithms utilizing autocorrelation analysis.
These findings provide fundamental insights into estimation limits in noisy environments and lay the groundwork for extending this analysis to more complex applications, such as cryo-EM. 

\end{abstract}

\section{Introduction}

 We study the multi-target detection (MTD) problem, where multiple signals are embedded at unknown positions within a long, noisy observation $y$~\cite{bendory2019multi}.
Let $x\in V$ be the target signal to be estimated, where  $V$ is a finite-dimensional vector space. To present the problem, we consider $V=\R^L$; however, we later extend the discussion to include band-limited images.
Let $g_i\in G$ be a group element drawn i.i.d. from a distribution $\rho$ over \(G\), and let $x_i = g_i\cdot x \in \mathbb{R}^L$.
An MTD observation $y: \ppp{0,1,\dots, LM-1} \to \mathbb{R}$ is given by 
\begin{align}
    y = \sum_{\n=0}^{\N-1} {s_{\n} * x_{\n}} + \epsilon, \label{eqn:MTDmodelGeneral}
\end{align}  
where $*$ denotes linear convolution, $\N$ is the number of signal occurrences, and \(\epsilon \sim \mathcal{N}(0, \sigma^2 I_{LM \times LM})\).
The unknown locations of the signal occurrences are represented by location indicators \(\{s_{\n}\}_{\n=0}^{\N-1}\). Each \(s_{\n}\) is a one-hot vector, where \(s_{\n}[\jj] = 1\) specifies that the first entry of \(x_\n\) is placed at entry \(\jj\) of the observation $y$. The locations \(\{s_\n\}_{\n=0}^{\N-1}\) are assumed to be deterministic, but unknown.
Since the statistics of \(y\) is the same for $x$ and $g\cdot x$ for any $g\in G$, we aim to recover the $G$-orbit of $x$ $\{g\cdot x\,|\, g\in G\}$.

\textbf{Motivation: Cryo-EM.} 
A key application of the MTD model is single particle cryo-electron microscopy (cryo-EM) \cite{nogales2016development, bendory2020single,singer2020computational}. In cryo-EM, 3-D biological samples, such as macromolecules or viruses, are rapidly frozen in a thin layer of vitreous ice (Figure \ref{fig:1}(a)). The specimen is then imaged using an electron microscope that captures 2D tomographic images, known as \textit{micrographs}. Each micrograph contains multiple tomographic projections of the molecules, where the 3-D orientation and 2-D position of each molecule are unknown. Therefore, cryo-EM models can be described similarly to a 2-D version of~\eqref{eqn:MTDmodelGeneral}, where each signal occurrence is of the form $x_\n = \Pi(g_\n \cdot x)$, where $g_\n$ is a 3-D rotation and $\Pi$ is a tomographic projection. (Note that the tomographic projection is not a group action, and thus~\eqref{eqn:MTDmodelGeneral} does not fully capture the cryo-EM model.)  

\begin{figure*}[!t]
    \centering
    \includegraphics[width=0.95 \linewidth]{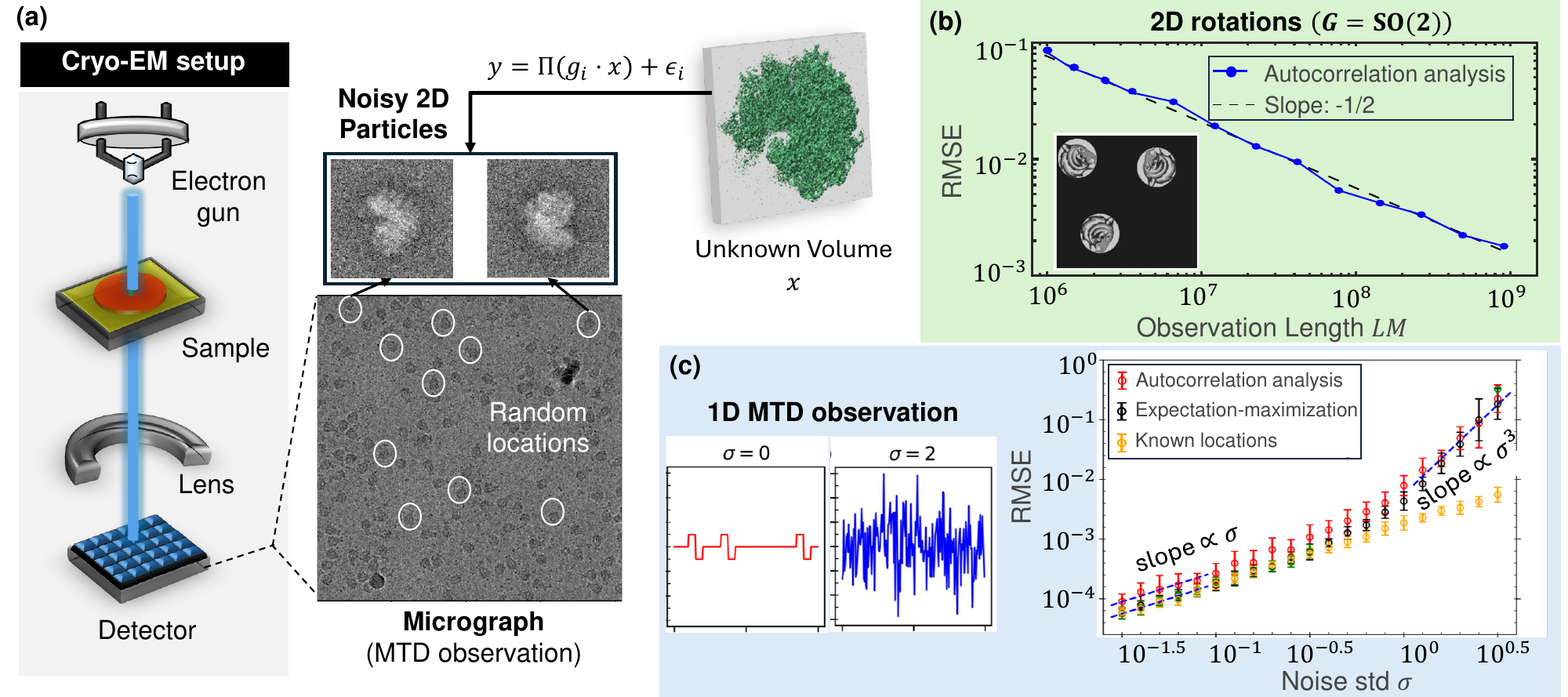}
    \caption{\textbf{(a)} Single-particle electron microscopy reconstructs 3D structures from 2D projections~\cite{bendory2020single}. Particles in vitrified ice form a \textit{micrograph}, modeled using the MTD framework~\eqref{eqn:MTDmodelGeneral}, where \( x_\n = \Pi(g_\n \cdot x) \), with \( g_\n \) a 3D rotation and \( \Pi \) as a tomographic projection. In high noise levels, direct particle detection is infeasible, but 3D reconstruction might be feasible by directly processing the micrographs~\cite{bendory2023toward}.  \textbf{(b)} The estimation error (RMSE) versus observation length for MTD with 2D rotations based on autocorrelation analysis (taken from~\cite{kreymer2022two}). As seen empirically, the image can be estimated using the third-order autocorrelation. \textbf{(c)} An MTD observation of 1D signals in which identical signals \(x_\n = x\) are located at unknown positions in the observation $y$. In the high noise regime, their locations cannot be estimated, but the signal \(x\) can be estimated accurately. The estimation error (RMSE) scales as $\sigma^3$ in high noise levels, consistent across autocorrelation analysis and expectation-maximization (taken from~\cite{lan2020multi}).}
    \label{fig:1}
\end{figure*}

Traditional methods for reconstructing the 3D structure from micrographs typically involve a two-step process: detecting and extracting the projection images from the micrographs, followed by 3-D reconstruction from those images. 
However, reliably detecting individual particles is challenging in high-noise environments. This is especially important for small molecular structures that induce high noise levels. 
Therefore, it is typically claimed that recovering small molecular structures using cryo-EM is impossible~\cite{henderson1995potential}. 
In this context, the MTD model was introduced as a computational framework to circumvent particle picking by directly reconstructing the 3D structure from the micrograph. Consequently, the inability to perform particle picking does not imply the impossibility of recovering the molecular structure, thereby enabling the recovery of small molecular structures~\cite{bendory2023toward,kreymer2023stochastic}. 
This technique also mitigates the ubiquitous bias issues in particle picking~\cite{henderson2013avoiding,balanov2024einstein}. 

\textbf{Contribution.} 
The main goal of this study is to derive upper and lower bounds on the sample complexity of the MTD problem in the high noise regime, focusing on three specific instances.
Section~\ref{sec:preliminaries} provides some basic definitions, and Section \ref{sec:autoCorrealtionAnalysis} introduces autocorrelation analysis, which is required for deriving the upper bounds of sample complexity. Section \ref{sec:mra} presents the multi-reference alignment (MRA) model and its tight connection to the MTD model, which is the cornerstone of the sample complexity lower bounds. Section~\ref{sec:results} integrates these findings, focusing on three specific MTD models, and Section~\ref{sec:outlook} delineates how these techniques could be extended to derive the sample complexity of the cryo-EM technology, the ultimate goal of this research.
The proofs of the results are relegated to the appendix. 

\section{Preliminaries} \label{sec:preliminaries}

\subsection{Assumptions on the MTD model}
We consider the asymptotic case of \(\sigma, \N, M \to \infty\), such that the density \(\gamma = \N / M<1\) remains fixed. 
We assume that \(\gamma\) and the noise variance \(\sigma^2\) are known. 

Throughout, we assume that the signal occurrences do not overlap. This, in turn, means that their starting indices are separated by at least $L$ entries so that their supports do not intersect, and the corresponding signal instances do not interfere with each other within the observation \( y \).
In some cases, we consider a stronger assumption that the minimal separation is doubled and refer to this model as the well-separated case. 
In this case,  the starting positions of any two occurrences must be separated by at least \( 2L \) positions, ensuring that their endpoints are separated by at least \( L \) signal-free entries (although still noisy) in the data. 

\begin{definition} [Separation] \label{def:wellSeperatedModel}
Consider the following separation condition:
\begin{align} \label{eq:sep}
  \nonumber \text{If} \ s_{\n_1}[\jj_1] = 1 \ \text{and} \ s_{\n_2}[\jj_2] = 1 \ \text{for} \ \n_1 \neq \n_2, \\
   \text{then} \ \abs{\jj_1 - \jj_2} \geq L_{\text{sep}}.
\end{align}
If the MTD model satisfies~\eqref{eq:sep} with $L_{\text{sep}}\geq L$, we say it is the non-overlapping; if $L_{\text{sep}}\geq 2L$, we say it is well-separated. 
\end{definition}

\begin{remark}
    The results of this paper regarding the well-separated model are valid for $L_{\text{sep}} \geq 2L - 1$. Nevertheless, for the sake of simplifying the expressions throughout the text, we assume a separation of $2L$.
\end{remark}

\subsection{Sample complexity definition}
The objective of this work is to determine the sample complexity, i.e., the minimum number of measurements required to achieve a desired mean-squared error (MSE) between the true signal $x$ and its estimator $\hat{x}$=$\hat{x}(y)$. Importantly, as the estimation of $x$ is considered up to a group action on $x$, the MSE is defined as~\cite{abbe2018multireference}: 
\begin{equation}
    \text{MSE}\p{\hat{x}, x} = \frac{1}{\norm{x}_F^2} \mathbb{E} \pp{\min_{g \in G} \norm{g \cdot \hat{x} - x}_F^2}. \label{eqn:mseDef}
\end{equation}

Then, the sample complexity is defined as follows:
\begin{definition} [Sample complexity] \label{def:sampleComplexity}
    Suppose that all the parameters of the MTD model \eqref{eqn:MTDmodelGeneral} are fixed except for $\N,M$, and $\sigma$. We define the smallest MSE attainable by any estimator as
    \begin{equation}
        \s{MSE}^*_{\s{MTD}}(\sigma^2, \N) := \inf_{\hat{x}} \mathbb{E} \pp{ \s{MSE} \p{\hat{x}(y), x}}, \label{eqn:infMtdMSE}
    \end{equation}
    and the sample complexity as,
    \begin{equation}
        \N^*_{\s{MTD}} \p{\sigma^2, \epsilon} := \min \left\{ \N : \s{MSE}^*_{\s{MTD}}(\sigma^2, \N) \leq \epsilon \right\}. \label{eqn:sampleComplexityMTD}
    \end{equation}
\end{definition}
In the above definition, $\epsilon$ represents a desired error, which in this paper approaches zero.

\section{Autocorrelation analysis} \label{sec:autoCorrealtionAnalysis}
In this section, we present the autocorrelation analysis method and provide several results that will be essential in establishing the upper bounds for sample complexity.
To address the difficulty of locating signals at high noise levels~\cite{aguerrebere2016fundamental,dadon2024detection},
previous works suggested the use of features that are invariant to unknown locations and possibly to the unknown group actions~\cite{bendory2019multi,bendory2023toward,kreymer2022two,lan2020multi,bendory2023multi}.

\begin{definition} [Empirical autocorrelations] \label{def:autoCorrelationMomentsEmpirical}
    Let $z \in \mathbb{R}^{LM}$ be an observation following the MTD model in \eqref{eqn:MTDmodelGeneral}. The empirical $d$-order autocorrelation of $z$ is defined by: 
\begin{align}
    \nonumber {a}^{(d)}_{z} & [\ell_1, \ell_2, ..., \ell_{d-1}] 
    \\ & = \frac{1}{LM} \sum_{\jj=0}^{LM-1} {\tilde{z}[\jj]\tilde{z}[\jj+\ell_1] ... \tilde{z}[\jj+\ell_{d-1}]}, \label{eqn:autoCorrealtionMoments}
\end{align}
for $0 \leq \ell_1, \ell_2, ..., \ell_{d-1}  \leq L-1$, where $\tilde{z}$ is the padding with zeros of $z$ to a length of $\p{M+1}L$.
\end{definition}

\begin{definition} [Autocorrelation ensemble mean] \label{def:autoCorrelationMomentsEnsembele}
    Let \( X_G = g \cdot x\in  \mathbb{R}^L \), where \( g \in G \) is a random group element drawn from the distribution \( g \sim \rho \).
     Define
    \begin{align}
        Y = \tilde{X}_G + \epsilon \in \mathbb{R}^{3L}, \label{eqn:ensembeleRV}
    \end{align}
    where $\tilde{X}_G = [\bold{0}_L, X_G, \bold{0}_L]$ is the padding of $X_G$ with zeros from left and right by $L$ zeros each, and $\epsilon \sim \mathcal{N} \p{0, \sigma^2 I_{3L \times 3L}}$.
    Then, the $d$-order autocorrelation ensemble mean of $Y$ is defined by,
\begin{align}
    \nonumber \bar{a}^{(d)}_{Y,\rho} & [\ell_1, \ell_2, ..., \ell_{d-1}] 
    \\ & = \frac{1}{2L} \sum_{\jj=0}^{2L-1} \mathbb{E}_{G \sim \rho, \epsilon} \ppp{Y[\jj]Y [\jj+\ell_1] ... Y[\jj+\ell_{d-1}]}, \label{eqn:autoCorrealtionMomentsEnsembele}
\end{align}
for $0 \leq \ell_1, \ell_2, ..., \ell_{d-1}  \leq L-1$. 
\end{definition}

The following proposition
shows, for the well-separated case, that the empirical autocorrelation converges almost surely to the autocorrelation ensemble mean. This result is an extension of~\cite[Appendix A]{bendory2019multi}. 

\begin{proposition}
\label{thm:prop0}
Let $\s{MTD}_G$ be a well-separated MTD model defined in \eqref{eqn:MTDmodelGeneral} with a compact group $G$ acting on $\mathbb{R}^L$, with group elements drawn according to a distribution $\rho$. Let ${a}^{(d)}_{z}$ be the empirical $d$-order autocorrelation as defined in \eqref{eqn:autoCorrealtionMoments}, and $\bar{a}^{(d)}_{Y,\rho}$ be the $d$-order autocorrelation ensemble mean as defined in \eqref{eqn:autoCorrealtionMomentsEnsembele}. Then, for every $0 \leq \ell_1, \ell_2, ..., \ell_{d-1}  \leq L-1$,
\begin{align}
    \nonumber \lim_{\N,M \to \infty} & {a}^{(d)}_{z} [\ell_1, \ell_2, ..., \ell_{d-1}]
    \\ \nonumber \equalityAS & \ 2 \gamma \bar{a}^{(d)}_{Y,\rho} [\ell_1, \ell_2, ..., \ell_{d-1}]
    \\ & + \p{1-2 \gamma} \chi [\ell_1, \ell_2, ..., \ell_{d-1}],
\end{align}
where 
\begin{align*}
    \chi[\ell_1, \ell_2, ..., \ell_{d-1}] = \mathbb{E} \pp{\epsilon \pp{0} \epsilon \pp{\ell_1} , ... \epsilon \pp{\ell_{d-1}}}.
\end{align*}
\end{proposition}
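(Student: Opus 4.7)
The plan is to decompose the empirical autocorrelation sum into contributions from windows that overlap a single signal occurrence versus windows containing only noise, and then invoke the strong law of large numbers (SLLN) on each part. The well-separation hypothesis $L_{\text{sep}} \geq 2L$ is essential: it guarantees that any length-$L$ window in the observation can intersect the support of at most one signal occurrence, so the two contributions are statistically clean.

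Let $p_\n$ denote the unique index with $s_\n[p_\n] = 1$ and define the influence region $R_\n := \{p_\n - L + 1, \ldots, p_\n + L - 1\}$ of $2L - 1$ starting indices $\jj$ for which the autocorrelation sites $\{\jj, \jj + \ell_1, \ldots, \jj + \ell_{d-1}\} \subseteq \{\jj, \ldots, \jj + L - 1\}$ may touch $[p_\n, p_\n + L - 1]$. Well-separation makes the $R_\n$ pairwise disjoint, and for $\jj \in R_\n$ the product $\tilde z[\jj]\,\tilde z[\jj + \ell_1]\cdots \tilde z[\jj + \ell_{d-1}]$ depends only on $g_\n$ and the noise restricted to a length-$L$ neighborhood of $p_\n$. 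A change of variable $\jj = p_\n - L + \jj'$ identifies the per-signal partial sum
\[
S_\n := \sum_{\jj \in R_\n} \tilde z[\jj]\,\tilde z[\jj + \ell_1]\cdots \tilde z[\jj + \ell_{d-1}]
\]
in distribution with $\sum_{\jj'=1}^{2L-1} Y_\n[\jj']\,Y_\n[\jj' + \ell_1]\cdots Y_\n[\jj' + \ell_{d-1}]$, where $Y_\n$ is an independent copy of the ensemble variable $Y$ from Definition~\ref{def:autoCorrelationMomentsEnsembele}.

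Since the $\{g_\n\}$ are i.i.d.\ and the noise across disjoint $R_\n$ is independent, the $\{S_\n\}$ are i.i.d.\ with mean $2L\,\bar a^{(d)}_{Y,\rho} - \chi$; the $-\chi$ corrects for the ensemble's $\jj' = 0$ term, which is pure noise and lies outside $R_\n$. Compactness of $G$ combined with Gaussianity of $\epsilon$ yields bounded second moments, so classical SLLN gives $\frac{1}{\N}\sum_\n S_\n \equalityAS 2L\,\bar a^{(d)}_{Y,\rho} - \chi$. Multiplying by $\N/(LM) = \gamma/L$, the signal contribution to $a^{(d)}_z$ converges a.s.\ to $2\gamma\,\bar a^{(d)}_{Y,\rho} - \gamma\chi/L$. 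The remaining $|R^c| = LM - \N(2L - 1)$ terms are products of Gaussians with mean $\chi$ (up to $O(L)$ boundary positions from the zero-padding of $\tilde z$). Although adjacent noise windows share entries, the stationary Gaussian noise field is ergodic, so Birkhoff's theorem -- or, more elementarily, an $O(1/(LM))$ variance bound combined with Borel--Cantelli -- yields a.s.\ convergence of the complementary average to $(1 - 2\gamma + \gamma/L)\chi$. Summing, the $\gamma\chi/L$ terms cancel exactly and one recovers the claimed limit $2\gamma\,\bar a^{(d)}_{Y,\rho} + (1 - 2\gamma)\chi$.

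The main technical hurdle is establishing a.s.\ convergence of the noise contribution despite the overlap of adjacent length-$L$ windows, which precludes a direct i.i.d.\ SLLN; either Birkhoff's ergodic theorem on the stationary Gaussian noise or a second-moment/Borel--Cantelli argument suffices. Boundary effects from the zero-padding touch only $O(L)$ positions near the endpoints and vanish as $M \to \infty$. The argument extends~\cite[Appendix A]{bendory2019multi} by inserting the random group action $X_G = g \cdot x$, $g \sim \rho$, into the per-occurrence random variable $Y$.
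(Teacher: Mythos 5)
Your overall strategy is the paper's: split the empirical autocorrelation into per-occurrence windows identified in distribution with the ensemble variable $Y$, plus a pure-noise remainder, apply a law of large numbers to each part, and check that the densities combine to $2\gamma$ and $1-2\gamma$. Your bookkeeping with $2L-1$ offsets (so that the spurious $\gamma\chi/L$ terms cancel) is consistent, and the identification of the \emph{marginal} law of $S_i$ with $\sum_{j'=1}^{2L-1} Y[j']\cdots Y[j'+\ell_{d-1}]$ is correct.

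There is, however, a genuine flaw: the claim that $\ppp{S_i}$ is i.i.d.\ is false in general. The starting-index sets $R_i$ are pairwise disjoint, but the \emph{sites} entering the products extend up to $(p_i+L-1)+(L-1)=p_i+2L-2$, while $S_{i+1}$ already reads sites from $p_{i+1}-L+1$ onward; at the minimal separation $p_{i+1}=p_i+2L$ these ranges overlap in $\ppp{p_i+L+1,\dots,p_i+2L-2}$, which is nonempty for $L\geq 3$. Consecutive $S_i$'s therefore share noise samples and are generically correlated (e.g.\ for $d=2$, $\ell_1=L-1$, both contain the factor $\epsilon[p_i+L+1]$ multiplied by signal-bearing entries, producing a nonzero covariance), so the classical i.i.d.\ SLLN does not apply to $\frac{1}{N}\sum_i S_i$ as written; your phrase ``the noise across disjoint $R_i$ is independent'' conflates the starting-index sets with the (larger, overlapping) site sets. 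The gap is repairable---the sequence is identically distributed and $1$-dependent, so an $m$-dependent SLLN or an even/odd split suffices---but the paper avoids it altogether by a finer decomposition: it fixes the offset $k\in\ppp{-L,\dots,L-1}$ and applies the SLLN separately to each sequence $\ppp{F[r_i+k]}_i$, each of which \emph{is} i.i.d.\ because for $k\geq 0$ it depends only on the pairwise disjoint windows $A_i=\p{z[r_i],\dots,z[r_i+2L-1]}$ and for $k<0$ only on the pairwise disjoint windows $B_i=\p{z[r_i-L],\dots,z[r_i+L-1]}$; the $2L$ almost-sure limits are then summed. A similar caveat applies to your Borel--Cantelli remark for the noise remainder: a variance of $O(1/(LM))$ is not summable in $M$, so you need a subsequence or fourth-moment argument, the ergodic route you mention, or a split of the noise indices into finitely many subsequences of mutually independent terms.
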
 

Next, we show that if the autocorrelations up to order \( \bar{d} \) uniquely define an orbit of the signal, then the sample complexity is bounded above by \( \omega \big(\sigma^{2\bar{d}}\big) \).

\begin{proposition}
\label{thm:prop1}
Assume the same conditions as in Proposition \ref{thm:prop0}. In addition, assume the parameter space $\Theta$ of the unknown signals $x \in \Theta$ is compact. Then, if the autocorrelation ensemble mean up to order $\bar{d}$, as defined in \eqref{eqn:autoCorrealtionMomentsEnsembele}, uniquely determines the orbit of the signal, then the sample complexity of recovering the orbit is upper bounded by $\omega \p{\sigma^{2 \bar{d}}}$.
\end{proposition}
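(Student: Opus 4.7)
The plan is to construct a moment-matching estimator from the empirical autocorrelations of orders $1,\dots,\bar d$ and bound its MSE by combining (i) concentration of each empirical autocorrelation around its ensemble value, and (ii) a quantitative identifiability argument enabled by the compactness of $\Theta$.

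First, I would introduce the map $\Phi: \Theta/G \to \mathcal{A}$ sending an orbit to the tuple of ensemble autocorrelations up to order $\bar d$, as expressed by the right-hand side of Proposition \ref{thm:prop0}. By hypothesis $\Phi$ is injective, and it is manifestly continuous. Since $\Theta/G$ is compact (being the continuous image of the compact $\Theta$) and Hausdorff, $\Phi$ is a homeomorphism onto its image, so $\Phi^{-1}$ is uniformly continuous with some modulus $\delta$ satisfying $\delta(\eta)\to 0$ as $\eta\to 0$. This is the point at which the compactness assumption on $\Theta$ is essential.

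Next, I would bound the concentration of the empirical autocorrelations. Writing $y = \text{(signal part)} + \epsilon$ and expanding the $d$-fold product in \eqref{eqn:autoCorrealtionMoments} produces a finite sum of terms containing between $0$ and $d$ noise factors. The purely signal terms have variance $O(1/(LM))$ by averaging over the $M$ occurrences, the pure noise terms have variance $O(\sigma^{2d}/(LM))$ by Isserlis-type identities, and the mixed terms lie in between. After subtracting the known $\chi$-contribution described in Proposition \ref{thm:prop0} (which is computable because $\sigma^2$, $\gamma$, and the group distribution are known), this yields $\E\|\hat{\mathbf a} - \Phi(x)\|^2 = O(\sigma^{2\bar d}/(LM))$, uniformly over $x\in\Theta$ since signal norms are bounded by compactness. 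The calculation is essentially the one carried out in~\cite{bendory2019multi} for low orders, now tracked through order $\bar d$.

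Finally, I would define $\hat x \in \argmin_{x'\in\Theta}\|\Phi(x')-\hat{\mathbf a}\|$, so that $\|\Phi(\hat x)-\Phi(x)\|\leq 2\|\hat{\mathbf a}-\Phi(x)\|$ and hence the orbit distance is at most $\delta(2\|\hat{\mathbf a}-\Phi(x)\|)$. Because $\delta$ is bounded on the compact image, dominated convergence converts $\E\|\hat{\mathbf a}-\Phi(x)\|^2 \to 0$ into $\s{MSE}(\hat x,x)\to 0$ whenever $LM = \omega(\sigma^{2\bar d})$. As $L$ and $\gamma = \N/M$ are fixed, this gives $\N^*_{\s{MTD}}(\sigma^2,\epsilon) = \omega(\sigma^{2\bar d})$, as claimed. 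The main obstacle is the quantitative identifiability step: the abstract modulus $\delta$ delivers only the qualitative $\omega$-scaling requested here, and pinning down explicit rates would require examining the local geometry of $\Phi$ (e.g.\ injectivity of its Jacobian away from symmetric orbits). The concentration bookkeeping for the mixed signal/noise terms is routine Gaussian moment computation.
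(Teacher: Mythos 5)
Your proposal is correct and follows the same overall strategy as the paper: concentration of the empirical autocorrelations around their ensemble means with mean-squared error $O\p{\sigma^{2\bar d}/\N}$ (the paper gets this via Chebyshev/Markov for single entries and then for the full tensor in Frobenius norm), followed by a moment-matching estimator over the compact $\Theta$ whose consistency rests on identifiability. The one substantive difference is how the final consistency step is closed. The paper casts the matching problem as an extremum-estimation problem and cites Newey--McFadden's consistency theorem, checking its four hypotheses (identification, convergence of the objective, continuity in $x$, compactness); you instead argue directly that the moment map $\Phi$ is a continuous injection on a compact orbit space, hence a homeomorphism onto its image with uniformly continuous inverse, and use that modulus of continuity to convert concentration of the empirical autocorrelations into convergence of the orbit estimate. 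These are essentially the same mechanism --- the textbook proof of the extremum-estimator theorem is precisely this compactness argument --- but your version is self-contained, and it is slightly more careful on a point the paper leaves implicit: the sample-complexity definition requires the \emph{expected} loss \eqref{eqn:mseDef} to vanish, not merely convergence in probability of the orbit estimate, and your remark that the loss is bounded on the compact $\Theta$ (so bounded convergence applies) is exactly what is needed to bridge that gap. Both arguments are equally non-quantitative, delivering only the $\omega\p{\sigma^{2\bar d}}$ scaling and no explicit rate, as you note.
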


\section{The multi-reference alignment model} \label{sec:mra}
\subsection{The multi-reference alignment model}
To establish the lower bounds for the MTD models, we first introduce
the multi-reference alignment (MRA) model~\cite{bandeira2014multireference}. 
Formally, the MRA model is defined as,
\begin{equation}
    z_\n = g_\n \cdot x + \epsilon_\n, \quad g_\n \in G, \label{eqn:MRA_model}
\end{equation}
where $G$ is a known compact group, the group elements are drawn from some distribution $\rho$ over $G$,  and $\epsilon_\n \sim \mathcal{N}(0, \sigma^2 I)$. The objective is to estimate \(x \in \mathbb{R}^L\) from \(\N\) observations \(\{z_\n\}_{\n=0}^{\N-1}\). From the perspective of this paper, the MRA problem can be interpreted as a simplified model of the MTD problem, when the signal locations are known. The sample complexity of the MRA problem can be defined analogously to the sample complexity of Definition~\ref{def:sampleComplexity}, where the number of observations $\N$ replaces the number of signal occurrences, and the observed data is now \(\{z_\n\}_{\n=0}^{\N-1}\). We denote the sample complexity of MRA by $\N^*_{\s{MRA}_G} \p{\sigma^2, \epsilon}.$

\subsection{Mapping between the MRA model and the associated MTD model.}
Comparing the MRA model \eqref{eqn:MRA_model} with the MTD model \eqref{eqn:MTDmodelGeneral} shows that the MTD model has an additional degree of uncertainty due to the unknown locations. The following proposition shows that the sample complexity of the corresponding MRA problem lower bounds the sample complexity of the MTD model.

\begin{proposition}
\label{thm:prop2}
Consider an MRA model \eqref{eqn:MRA_model}, with a distribution \(\rho\) defined on a group $G$. Consider a corresponding MTD model \eqref{eqn:MTDmodelGeneral} with the same group \( G \) and distribution \(\rho\), with \( \N \) signal occurrences located at unknown positions \(\{s_\n\}_{\n=0}^{\N-1}\), assuming the non-overlapping case (Definition \ref{def:wellSeperatedModel}). Then, the sample complexity of the MTD model \eqref{eqn:MTDmodelGeneral} is lower bounded by the sample complexity of the associated MRA model \eqref{eqn:MRA_model}, i.e.,
\begin{align}
     N^*_{\s{MTD}_G} \p{\sigma^2, \epsilon} \geq  N^*_{\s{MRA}_G} \p{\sigma^2, \epsilon} \label{eqn:prop2}.
\end{align}
\end{proposition}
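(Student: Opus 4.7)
The plan is to exhibit a reduction in the opposite direction to the inequality: starting from MRA data, I fabricate an MTD observation that, from the estimator's point of view, is distributionally indistinguishable from a genuine MTD observation. Any MTD estimator can then be converted into an MRA estimator with identical expected MSE, which forces $N^*_{\s{MRA}_G}(\sigma^2,\epsilon) \leq N^*_{\s{MTD}_G}(\sigma^2,\epsilon)$.

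Concretely, given $\N$ MRA observations $z_\n = g_\n\cdot x + \epsilon_\n \in \R^L$ with $g_\n\sim \rho$ i.i.d.\ and $\epsilon_\n\sim \mathcal{N}(0,\sigma^2 I_L)$, I would pick an $M$ (in particular, any $M \geq \N$ with $\N/M = \gamma$, to match the density fixed by the MTD setup) and declare deterministic, non-overlapping location indicators $s_\n$, for example $s_\n[\n L]=1$ for $\n=0,\dots,\N-1$, so that the separation condition in Definition~\ref{def:wellSeperatedModel} holds with $L_{\text{sep}}=L$. I then form
\begin{equation*}
y \;=\; \sum_{\n=0}^{\N-1} s_\n * z_\n \;+\; \epsilon',
\end{equation*}
where $\epsilon'\in\R^{LM}$ is independent Gaussian noise of variance $\sigma^2$ on the coordinates \emph{not} covered by any of the embedded patches, and zero on the coordinates covered. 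Because the $z_\n$ already contain independent $\mathcal{N}(0,\sigma^2)$ noise on their support, the full noise field on $y$ is an i.i.d.\ $\mathcal{N}(0,\sigma^2 I_{LM\times LM})$ sample; equivalently, $y$ has exactly the law of an MTD observation~\eqref{eqn:MTDmodelGeneral} with $\N$ occurrences, the prescribed locations $\{s_\n\}$, group elements $\{g_\n\}\sim\rho^{\otimes \N}$, noise variance $\sigma^2$, and density $\gamma$.

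Now let $\hat x_{\s{MTD}}$ be any MTD estimator. Define the MRA estimator $\hat x_{\s{MRA}}(z_0,\dots,z_{\N-1}) := \hat x_{\s{MTD}}(y)$, where $y$ is constructed from the $z_\n$'s as above (using fresh, independent Gaussian noise $\epsilon'$ generated by the estimator). Since the joint law of $y$ matches that of the MTD model and the MSE in~\eqref{eqn:mseDef} is defined through the same group-invariant loss for both models, we have
\begin{equation*}
\E\!\left[\s{MSE}\!\left(\hat x_{\s{MRA}},x\right)\right] \;=\; \E\!\left[\s{MSE}\!\left(\hat x_{\s{MTD}}(y),x\right)\right].
\end{equation*}
Taking the infimum over MTD estimators on the right yields $\s{MSE}^*_{\s{MRA}}(\sigma^2,\N) \leq \s{MSE}^*_{\s{MTD}}(\sigma^2,\N)$, and~\eqref{eqn:prop2} follows by comparing the sets in~\eqref{eqn:sampleComplexityMTD} for the two models.

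The main obstacle, and the point that must be argued carefully, is the distributional equivalence: one has to verify that fixing the locations $\{s_\n\}$ deterministically (rather than treating them as adversarial) does not weaken the bound, and that the hybrid noise field (the intrinsic noise inside each $z_\n$ together with the synthetic $\epsilon'$ outside) is genuinely i.i.d.\ Gaussian, so that the MTD estimator cannot distinguish the synthesized $y$ from a true MTD sample. The non-overlap assumption is precisely what makes this decomposition clean, since it guarantees disjoint supports for the embedded patches and thus independence between the patch noise and the synthetic noise on the complement.
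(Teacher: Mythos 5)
Your proposal is correct and follows essentially the same route as the paper: the appendix constructs exactly this measurable map (place the $\N$ MRA instances at non-overlapping locations, fill the complement with fresh i.i.d.\ $\mathcal{N}(0,\sigma^2)$ noise) and then invokes a data-processing lemma for estimators (Lemma~\ref{lemma:1}, defining $\hat{\theta}_U = \hat{\theta}_V \circ h$) to conclude $\s{MSE}^*_{\s{MRA}} \leq \s{MSE}^*_{\s{MTD}}$ and hence the sample-complexity inequality.
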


\section{Sample complexity bounds} \label{sec:results}
We now establish lower and upper bounds on the sample complexity for three specific MTD instances. 
The lower bounds are derived by reducing the problem to the corresponding MRA models and leveraging existing results. The upper bounds are obtained through explicit recovery techniques based on autocorrelation analysis.

\subsection{One-dimensional MTD with circular translations}
The first MTD model considers $x\in\R^L$, $G$ is the group of circular translations $G=\mathbb{Z}_L$,  and $\rho$ is a uniform distribution. We demonstrate that both the upper and lower bounds for the sample complexity are $\omega \p{\sigma^6}$. 

\begin{proposition}
 \label{thm:thm1DsignalTranslations}
Consider the MTD model with a uniform distribution $\rho$ of the group $G = \mathbb{Z}_L$ elements, acting on $x\in\R^L$. Assume that the signal $x$ has a non-vanishing DFT. Then, as $\sigma, \N \to \infty$, we have,
\begin{enumerate}
    \item The sample complexity of the well-separated case is bounded from above by  $\omega \p{{\sigma^6}}$. 
    \item The sample complexity of the non-overlapping case is bounded from below by $\omega \p{{\sigma^6}}$.
\end{enumerate}
   \end{proposition}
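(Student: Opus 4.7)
The plan is to prove the two bounds independently, invoking the two main tools the paper has already established. The upper bound follows from Proposition \ref{thm:prop1} applied with $\bar{d}=3$; the lower bound follows from Proposition \ref{thm:prop2} together with the known $\omega(\sigma^6)$ sample complexity of the corresponding MRA problem.

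For the upper bound, first I would unwind the autocorrelation ensemble means $\bar{a}^{(2)}_{Y,\rho}$ and $\bar{a}^{(3)}_{Y,\rho}$ from Definition \ref{def:autoCorrelationMomentsEnsembele}, using that $X_G = g\cdot x$ is a uniformly random cyclic shift of $x$ and that the outer $L$-zero padding decouples any cross-boundary terms. Subtracting the known Gaussian moments of $\epsilon$ isolates the noise-free quantities, which coincide up to normalisation with the classical shift-invariant second- and third-order autocorrelations of $x$, i.e.\ the power spectrum and the bispectrum. The non-vanishing DFT assumption is precisely the condition under which bispectrum inversion is well-posed: the power spectrum pins down the DFT magnitudes, and the bispectrum then pins down the DFT phases up to a linear term, which corresponds exactly to an unknown cyclic shift. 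Hence the orbit of $x$ under $\Z_L$ is uniquely determined by the ensemble autocorrelations of order at most three, and Proposition \ref{thm:prop1} immediately yields the $\omega(\sigma^6)$ upper bound; the well-separated hypothesis is inherited from that proposition.

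For the lower bound, applying Proposition \ref{thm:prop2} to the non-overlapping model gives
\[
\N^*_{\s{MTD}_{\Z_L}}\p{\sigma^2,\epsilon} \geq \N^*_{\s{MRA}_{\Z_L}}\p{\sigma^2,\epsilon}.
\]
It is a classical result for MRA with $G=\Z_L$ and uniform $\rho$ that the right-hand side scales as $\omega(\sigma^6)$: the lowest-order shift-invariant feature beyond the power spectrum has order three, and standard moment-matching / Chapman--Robbins type arguments convert this into a cubic-in-$\sigma^2$ lower bound on the number of samples needed to distinguish signals with identical power spectra but distinct bispectra. Chaining the two inequalities yields the claim.

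The main obstacle is the autocorrelation-ensemble computation underlying the upper bound: one must carefully split the empirical autocorrelation into signal--signal, signal--noise, and noise--noise contributions in the spirit of Proposition \ref{thm:prop0}, track how the uniform averaging over $\Z_L$ symmetrises the signal part, and verify that the resulting quantity is inverted by the standard bispectrum procedure under the non-vanishing DFT hypothesis. The lower bound, by contrast, is essentially a citation once Proposition \ref{thm:prop2} has been invoked; the only thing to check is that the cited MRA lower bound is stated under the same MSE normalisation as Definition \ref{def:sampleComplexity}.
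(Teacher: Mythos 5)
Your proposal is correct and follows essentially the same route as the paper: the lower bound via Proposition \ref{thm:prop2} plus the known $\omega(\sigma^6)$ sample complexity of uniform cyclic MRA from \cite{abbe2018multireference}, and the upper bound via Proposition \ref{thm:prop1} with $\bar{d}=3$. The only difference is that where you sketch the power-spectrum/bispectrum inversion argument for orbit uniqueness under the non-vanishing DFT assumption, the paper simply cites this as an existing result (\cite[Theorem 3.2]{bendory2023multi}).
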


We remark that if $\rho$ is non-uniform, then the upper bound remains $\omega \p{{\sigma^6}}$ and the lower bound is $\omega \p{{\sigma^4}}$~\cite{abbe2018multireference}.

\subsection{MTD with two-dimensional rotations}
Next, we consider the MTD problem of estimating a two-dimensional \emph{band-limited} target image $x$ acted upon by SO(2) rotations, drawn from a uniform distribution.  
By bandlimited images, we mean it can be represented by finitely many Fourier-Bessel coefficients or alternative steerable basis coefficients~\cite{marshall2023fast}.
Importantly, although the results from the previous sections pertain to signals in $\R^L$, they can be naturally extended to this case with suitable modifications.

\begin{proposition}
\label{thm:thm2Drotations}
Consider the MTD model of estimating non-overlapping images, acted upon by SO(2) rotations drawn from a uniform distribution. 
Assume that the image $x$ has a finite expansion in a steerable basis (e.g., Fourier-Bessel) and all the coefficients are non-zero. Then, as $\sigma, \N \to \infty$, the sample complexity is lower bounded by  $\omega \p{{\sigma^6}}$.
\end{proposition}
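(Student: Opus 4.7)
The strategy is a two-step reduction. First, I would apply Proposition~\ref{thm:prop2} with $G = \mathrm{SO}(2)$ and $\rho$ the Haar measure to obtain
\[
\N^*_{\s{MTD}_G}\p{\sigma^2,\epsilon} \;\geq\; \N^*_{\s{MRA}_G}\p{\sigma^2,\epsilon},
\]
reducing the task to an $\omega(\sigma^6)$ lower bound on the MRA sample complexity for uniform $\mathrm{SO}(2)$ rotations acting on bandlimited images. Second, I would establish this MRA lower bound by a method-of-moments argument, analogous in spirit to the cyclic case of Proposition~\ref{thm:thm1DsignalTranslations}, but carried out in the steerable (Fourier-Bessel) basis of~\cite{marshall2023fast}.

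For the MRA lower bound, expand $x = \sum_{k,q} \alpha_{k,q} \psi_{k,q}$ in a steerable basis so that a rotation by $\theta_0$ acts as $g \cdot \psi_{k,q} = e^{ik\theta_0}\psi_{k,q}$. Under the Haar distribution on $\mathrm{SO}(2)$, the first ensemble moment $\E[g\cdot x]$ annihilates every angular mode with $k \neq 0$, while the second ensemble moment depends only on the inner products $\alpha_{k,q}\overline{\alpha_{k,q'}}$ within a common angular frequency $k$ and never couples distinct values of $k$. Consequently, the second moment is blind to the relative phases between angular harmonics, so infinitely many distinct orbits in the bandlimited parameter space share the same first two ensemble moments, whereas, under the non-vanishing coefficients assumption, the third-order bispectrum-like moment resolves the inter-harmonic phase coupling and determines the orbit. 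With the lowest orbit-identifying moment of order $\bar{d} = 3$, the Chapman-Robbins / local-likelihood machinery developed for MRA in~\cite{abbe2018multireference,bandeira2014multireference} yields $\N^*_{\s{MRA}_G}\p{\sigma^2,\epsilon} = \omega(\sigma^6)$, which, combined with the reduction above, gives the claimed lower bound.

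The main obstacle is the second-moment ambiguity step: one must verify, within the class of bandlimited images whose steerable coefficients are all non-zero, that perturbing the phases $\{\arg \alpha_{k,q}\}_{k \neq 0}$ by angles that are \emph{not} of the form $k\phi$ for a common $\phi$ yields genuinely distinct orbits sharing identical first and second ensemble moments. Care is needed because $\mathrm{SO}(2)$ is continuous, unlike the finite $\mathbb{Z}_L$ of Proposition~\ref{thm:thm1DsignalTranslations}, so I would invoke Schur orthogonality of the characters $e^{ik\theta}$ to show rigorously that the Haar-averaged second moment decouples across angular frequencies. Once this moment-ambiguity step is settled, transferring the Chapman-Robbins lower bound argument from the cyclic case is largely cosmetic, since band-limitedness makes the angular index $k$ discrete and the analysis reduces to finitely many decoupled harmonic components, each behaving like a complex analog of the $\mathbb{Z}_L$ MRA problem already handled in Proposition~\ref{thm:thm1DsignalTranslations}.
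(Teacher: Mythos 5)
Your proposal is correct and follows essentially the same route as the paper: reduce the MTD problem to the associated uniform $\mathrm{SO}(2)$ MRA problem via Proposition~\ref{thm:prop2}, and then obtain the $\omega\p{\sigma^6}$ MRA lower bound from the fact that the third moment is the lowest order that determines the orbit of a band-limited image with non-vanishing steerable coefficients. The harmonic-decoupling and phase-ambiguity argument you sketch for why the second moment fails is precisely the content the paper delegates to \cite[Theorem II.1]{ma2019heterogeneous} and the accompanying remark that the power spectrum alone does not determine the image.
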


Previous empirical findings suggest that the third autocorrelation moment is sufficient to reconstruct the two-dimensional image~\cite{ma2019heterogeneous, kreymer2022two}; see Figure~\ref{fig:1}(b). Another study~\cite{kreymer2022approximate} showed that an approximate expectation-maximization method achieves comparable results. Based on this, we conjecture an upper bound of $\omega \p{\sigma^6}$. 

\subsection{MTD with one-dimensional single signal}
We now consider the one-dimensional MTD model with no group action, $G=I$. In this case, the sample complexity is upper bounded by  $\omega \p{\sigma^6}$.

\begin{proposition}
\label{thm:thm1Dsignal}
Consider the MTD model for $x\in\R^L$ and $G=I$.  Assume that the signal $x$ has non-vanishing entries.   
Then, as $\sigma, \N \to \infty$, the sample complexity is upper bounded by $\omega \p{{\sigma^6}}$.
\end{proposition}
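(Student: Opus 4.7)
The plan is to invoke Proposition~\ref{thm:prop1} with $\bar{d}=3$. Since $G=I$ is trivial, the $G$-orbit of $x$ equals $\{x\}$, so it suffices to show that the autocorrelation ensemble means of orders one through three uniquely determine $x\in\mathbb{R}^L$; Proposition~\ref{thm:prop1} will then convert this into the desired sample complexity upper bound.

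The first step is to strip the noise contributions. Under $G=I$, $X_G=x$ deterministically and $Y=\tilde{x}+\epsilon$, where $\tilde{x}=[\mathbf{0}_L,x,\mathbf{0}_L]\in\mathbb{R}^{3L}$. Expanding $\mathbb{E}[Y[j]Y[j+\ell_1]\cdots Y[j+\ell_{d-1}]]$ yields the pure-signal autocorrelation
\begin{equation*}
b_d(\ell_1,\ldots,\ell_{d-1})=\sum_{k=0}^{L-1-\max_i \ell_i} x[k]\,x[k+\ell_1]\cdots x[k+\ell_{d-1}]
\end{equation*}
plus cross terms that pair subsets of the $Y$-factors against moments of $\epsilon$. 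Odd Gaussian moments vanish, and the surviving third-order cross terms reduce to known multiples of $\sigma^{2}\,S_x$ with $S_x=\sum_{k}x[k]$, which itself is recovered from $\bar{a}^{(1)}_{Y,\rho}$. Thus from $\bar{a}^{(d)}_{Y,\rho}$ for $d\le 3$ one extracts $b_2(\ell)$ for $0\le\ell\le L-1$ and $b_3(\ell_1,\ell_2)$ for $0\le\ell_1,\ell_2\le L-1$.

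The second step exploits the ``edge'' behavior of the linear (non-circular) autocorrelations at the extremal lag $\ell=L-1$, where the summation window collapses to the single index $k=0$:
\begin{equation*}
b_2(L-1)=x[0]\,x[L-1],\qquad b_3(k,L-1)=x[0]\,x[k]\,x[L-1],\ \ 0\le k\le L-1.
\end{equation*}
Under the non-vanishing entries hypothesis, $x[0]\,x[L-1]\neq 0$, so $b_2(L-1)\neq 0$ and
\begin{equation*}
x[k]=\frac{b_3(k,L-1)}{b_2(L-1)}\quad\text{for all }0\le k\le L-1,
\end{equation*}
uniquely recovering $x$. Proposition~\ref{thm:prop1} then yields the stated $\omega(\sigma^{6})$ upper bound.

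I do not expect a substantial obstacle: the non-circular boundary of the zero-padded signal forces the extremal third-order autocorrelations to be pure monomials in the two end entries of $x$, which trivializes the inversion. The remaining care is routine bookkeeping—subtracting the $\sigma^{2}\,S_x$ cross terms using the known $\sigma^{2}$ and $\bar{a}^{(1)}_{Y,\rho}$, and arranging the compactness hypothesis of Proposition~\ref{thm:prop1} by restricting $x$ to a bounded ball with entries bounded away from zero.
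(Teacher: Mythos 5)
Your proposal is correct, and its skeleton matches the paper's: both reduce the claim to Proposition~\ref{thm:prop1} with $\bar{d}=3$, so everything hinges on showing that the autocorrelation ensemble means up to order three determine $x$. Where you differ is in how that uniqueness step is established. The paper simply cites an external result (Proposition~4.1 of the Bendory et al.\ MTD paper, which asserts uniqueness for signals with non-vanishing edge entries) and stops. You instead prove uniqueness from scratch with an explicit inversion: after stripping the Gaussian cross terms (correctly noting that the odd moments vanish and the even ones contribute only known multiples of $\sigma^{2}$ times $\sum_k x[k]$, itself available from the first-order autocorrelation), you observe that the extremal lag $\ell=L-1$ collapses the non-circular autocorrelation sums to single monomials, giving $b_2(L-1)=x[0]x[L-1]$ and $b_3(k,L-1)=x[0]x[k]x[L-1]$, whence $x[k]=b_3(k,L-1)/b_2(L-1)$. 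This buys a self-contained, constructive proof that also makes explicit that only $x[0]x[L-1]\neq 0$ is needed --- weaker than the stated ``non-vanishing entries'' hypothesis and consistent with the hypothesis of the result the paper cites. Your closing remarks about handling the compactness hypothesis of Proposition~\ref{thm:prop1} by restricting to a suitable compact parameter set are appropriate; the paper glosses over this point entirely.
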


The lower bound for this model has not yet been proven. Based on previous empirical results~\cite{lan2020multi} (see Figure \ref{fig:1}(c)), we conjecture a lower bound of $\omega(\sigma^6)$. 

\section{Outlook} \label{sec:outlook}
This work is a first step toward analyzing the sample complexity of the MTD model, where the ultimate goal is to understand the sample complexity of the cryo-EM. 

\textbf{Cryo-EM and related models.} The lower bounds in this work are derived by reducing the MTD model to a simpler MRA model. Consequently, any findings on the sample complexity of the MRA problem directly translate into lower bounds for the MTD model. In particular, in~\cite{bandeira2023estimation}, the authors derived the sample complexity of many different MRA models and cryo-EM for generic signals in some specific parameter regimes. In particular, this paper implies a lower bound of $\omega\p{\sigma^6}$ on the sample complexity of the cryo-EM problem, at least in some parameter regime. 
Similarly, for signals in $\R^L$ acted upon by elements of the dihedral group, 
the results of~\cite{bendory2022dihedral,edidin2024generic}  imply a lower bound of $\omega\p{\sigma^6}$ if the distribution over the group is uniform, and $\omega\p{\sigma^4}$ if the distribution is non-uniform.

\textbf{Priors.} The analysis of the lower bounds on the sample complexity can be extended to include a prior on the signal. For instance, in~\cite{romanov2021multi}, the sample complexity of high-dimensional MRA under a Gaussian prior has been derived. Similarly, in~\cite{bendory2024transversality}, the sample complexity of signals that reside in a low-dimensional semi-algebraic set was studied.
This generalization underscores the robustness of the derived bounds, demonstrating their applicability even when additional prior information about $x$ is included in the model.

\textbf{Upper bounds via autocorrelation analysis.} The upper bounds are based on explicit algorithms for signal recovery from autocorrelations. To date, very few provable MTD algorithms have been designed. Designing such algorithms---besides their importance for validation, reproducibility,  and consistency---will immediately imply sample complexity upper bounds.

\bibliographystyle{plain}

\appendix
\section{Proofs}
\subsection{Proof of Proposition \ref{thm:prop0}}
Let \( z \in \mathbb{R}^{LM} \) represent observations sampled from the well-separated MTD model~\eqref{eqn:MTDmodelGeneral}. 
The indices \( \ppp{r_0, r_1, r_2, \ldots, r_{\N-1}} \) denote the starting indices of the signals \( \{x_\n\}_{\n=0}^{\N-1} \) in $z$, where \( s_\n[r_\n] = 1 \) for every $i \in \pp{N}$ and zeros otherwise. 
According to the well-separated assumption (Definition~\ref{def:wellSeperatedModel}), the distance between consecutive indices satisfies \( \abs{r_m - r_{m-1}} \geq 2L \).

Now, define two sets of random vectors $\ppp{A_\n}_{\n=0}^{\N-1}$, and $\ppp{B_\n}_{\n=0}^{\N-1}$, by,
\begin{align}
    A_{\n} = \p{z\pp{r_\n}, z\pp{r_\n+1}, \dots, z\pp{r_{\n}+2L-1}},
    \label{eqn:seqAi}
\end{align}
and,
\begin{align}
    B_{\n} = \p{z\pp{r_\n-L}, \dots, z\pp{r_\n}, \dots, z\pp{r_{\n}+L-1}}.
    \label{eqn:seqBi}
\end{align}
Note that $A_{\n}, B_{\n} \in \mathbb{R}^{2L}$.
These definitions correspond to slices of the observation \( z \):
\begin{itemize}
    \item $A_\n$ is a segment of length \( 2L \), starting from the beginning of the \( \n \)-th signal instance $x_\n$.
    \item $B_\n$ is also a segment of length \( 2L \), but it begins \( L \) entries before the starting position of the \( \n \)-th signal instance $x_\n$.
\end{itemize}
Then, we have the following characterization of the sequences of $A_\n, B_\n$ for $\n \in \pp{\N}$, proven at this section's end.
\begin{lem} \label{lemma:C00} 
Assume the well-separated MTD model (Definition \ref{def:wellSeperatedModel}). Then, the sequence  
$\ppp{A_\n}_{\n=0}^{\N-1}$ is i.i.d., and follows the distribution of,
\begin{align}
    A = [g \cdot x, \bold{0}_L] + \epsilon \in \mathbb{R}^{2L}, \label{eqn:ArV}
\end{align}
where $\bold{0}_L$ denotes a vector of zeros of length $L$, $g \in G$ is drawn according to the distribution law $\rho$, and $\epsilon \sim \mathcal{N} \p{0, \sigma^2I_{2L \times 2L}}$.
Similarly, the sequence  
$\ppp{B_\n}_{\n=0}^{\N-1}$ is i.i.d., and follows the distribution of,
\begin{align}
    B = [ \bold{0}_L, g \cdot x] + \epsilon \in \mathbb{R}^{2L}. \label{eqn:BrV}
\end{align}
 
\end{lem}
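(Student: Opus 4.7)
The plan is to decompose each $A_\n$ (and later $B_\n$) into a signal contribution and a noise contribution, and then use the well-separated condition to argue that (i) only the $\n$-th signal occurrence contributes within the window, and (ii) the noise windows for different indices $\n$ are supported on disjoint sets of coordinates of $\epsilon$, so that the overall sequences are i.i.d.

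First I would fix $\n$ and write, by \eqref{eqn:MTDmodelGeneral}, $z[j] = \sum_{m=0}^{\N-1} (s_m * x_m)[j] + \epsilon[j]$ for each $j$ in the window $[r_\n, r_\n + 2L - 1]$. Using $s_m[r_m] = 1$, the $m$-th summand contributes $x_m[j - r_m]$ whenever $0 \le j - r_m \le L-1$ and vanishes otherwise. For $m = \n$ and $j \in [r_\n, r_\n + L - 1]$ this gives exactly $x_\n[j - r_\n] = (g_\n \cdot x)[j - r_\n]$, while for $j \in [r_\n + L, r_\n + 2L - 1]$ the $\n$-th summand vanishes. For $m \ne \n$, the well-separated assumption $|r_m - r_\n| \ge 2L$ forces $j - r_m \notin [0, L-1]$ for every $j \in [r_\n, r_\n + 2L - 1]$, so those summands also vanish. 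Consequently,
\begin{align*}
    A_\n = [g_\n \cdot x,\, \mathbf{0}_L] + \epsilon_{A_\n},
\end{align*}
where $\epsilon_{A_\n}$ is the restriction of $\epsilon$ to the coordinate block $[r_\n, r_\n + 2L - 1]$. Since $\epsilon \sim \mathcal{N}(0, \sigma^2 I)$, this restriction is $\mathcal{N}(0, \sigma^2 I_{2L \times 2L})$, matching the distribution of $A$ in \eqref{eqn:ArV}.

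Next I would establish the joint i.i.d.\ claim. The coordinate blocks $[r_\n, r_\n + 2L - 1]$ across different $\n$ are pairwise disjoint, which again follows from $|r_m - r_\n| \ge 2L$ (assuming without loss of generality $r_\n < r_m$, the block of index $\n$ ends at $r_\n + 2L - 1 < r_m$). Hence the noise restrictions $\{\epsilon_{A_\n}\}_{\n=0}^{\N-1}$ are independent across $\n$. Combined with the i.i.d.\ assumption on the group elements $\{g_\n\}$ and the independence of $\{g_\n\}$ from $\epsilon$, the pairs $\{(g_\n, \epsilon_{A_\n})\}$ are i.i.d., and so are the vectors $A_\n$, each distributed as $A$.

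The argument for $B_\n$ is entirely parallel: the window $[r_\n - L,\, r_\n + L - 1]$ contains the support of the $\n$-th occurrence only in its right half $[r_\n, r_\n + L - 1]$, on which the contribution is $(g_\n \cdot x)[j - r_\n]$, while the left half contains no signal because the preceding occurrence, if any, ends at $r_{\n-1} + L - 1 \le r_\n - L - 1$ by the well-separated condition. The same disjointness argument applied to the windows $[r_\n - L,\, r_\n + L - 1]$ yields the i.i.d.\ structure with marginal law \eqref{eqn:BrV}. The main obstacle is just careful index bookkeeping to verify the two support conditions (no other occurrence intrudes into a window, and windows belonging to distinct $\n$ are disjoint); both reduce directly to $|r_m - r_\n| \ge 2L$.
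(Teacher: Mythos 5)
Your proof is correct and follows essentially the same route as the paper's: decompose each window into the contribution of the $\n$-th occurrence plus a restriction of the noise, then use the well-separated condition both to rule out intrusion of other occurrences into the window and to make the noise blocks (and hence the pairs $(g_\n,\epsilon_{A_\n})$) independent across $\n$. Your version simply makes the index bookkeeping explicit where the paper states it more briefly.
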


By definition, the distribution of $A$ \eqref{eqn:ArV} is equal to the distribution of the ensemble random vector $Y$ \eqref{eqn:ensembeleRV} from the index $L$ to the index $3L-1$:
\begin{align}
    A = \p{Y\pp{L}, \dots, Y\pp{3L-1}}. \label{eqn:app_A5}
\end{align}
Similarly, the distribution of $B$ \eqref{eqn:BrV} is equal to the distribution of $Y$ \eqref{eqn:ensembeleRV} from the index $0$ to the index $2L-1$:
\begin{align}
    B = \p{Y\pp{0}, \dots, Y\pp{2L-1}}. \label{eqn:app_A6}
\end{align}

To simplify notation, let us define
\begin{align}
    F \pp{\jj} = z\pp{\jj}z\pp{\jj+\ell_1} ...z\pp{\jj+\ell_{d-1}}, \label{eqn: app_C0}
\end{align}
for $0 \leq \ell_1, \ell_2, ..., \ell_{d-1}  \leq L-1$. With this definition, the empirical autocorrelation of order $d$ \eqref{eqn:autoCorrealtionMoments} can be expressed as,
\begin{align}
    {a}^{(d)}_{z} & [\ell_1, \ell_2, ..., \ell_{d-1}]  = \frac{1}{LM} \sum_{\jj=0}^{LM-1} F\pp{j}. \label{eqn:autoCorrelationInFrep}
\end{align}

It is important to note that $F\pp{\jj}$ is generally not an i.i.d. sequence for different $\jj$'s. As a result, the strong law of large numbers (SLLN) cannot be applied directly on \eqref{eqn:autoCorrelationInFrep}. To address this, we decompose the sum in \eqref{eqn:autoCorrelationInFrep} into several distinct infinite sums, each containing i.i.d.\ terms. The SLLN is then applied to each one of these infinite sums separately.

By the definition of $F\pp{\jj}$, and the starting positions $\ppp{r_\n}_{\n=0}^{\N-1}$, for every $0 \leq k < L-1$, the term $F\pp{r_\n + k}$ depends solely on the vector $A_\n$, as defined in \eqref{eqn:seqAi}. Similarly, for $-L \leq k < 0$, the term $F\pp{r_\n + k}$ depends solely on the vector $B_\n$, as defined in \eqref{eqn:seqBi}.
Based on this and Lemma \ref{lemma:C00}, we make the following observation, which is proven later on.

\begin{lem} \label{lemma:C0} 
Assume the well-separated MTD model, and recall the definition of $F\pp{\jj}$ in \eqref{eqn: app_C0}. Fix $-L \leq k < L$. Then, the sequence $\ppp{F\pp{r_\n + k}}_{\n=0}^{\N-1}$ is i.i.d, where $\ppp{r_\n}_{\n=0}^{\N-1}$ are the beginning of the locations of the signals, $s_\n[r_\n] = 1$.
\end{lem}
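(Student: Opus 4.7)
The plan is to reduce Lemma \ref{lemma:C0} to Lemma \ref{lemma:C00} by expressing $F\pp{r_\n+k}$ as a deterministic measurable function of either $A_\n$ or $B_\n$, depending on the sign of $k$. Since Lemma \ref{lemma:C00} already guarantees that both sequences $\{A_\n\}_{\n=0}^{\N-1}$ and $\{B_\n\}_{\n=0}^{\N-1}$ are i.i.d., applying a common Borel function entry-wise will preserve this property and yield the claim.

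The execution splits into two cases according to the sign of $k$. For $0 \leq k \leq L-1$, every index appearing in $F\pp{r_\n+k} = z\pp{r_\n+k}\,z\pp{r_\n+k+\ell_1}\cdots z\pp{r_\n+k+\ell_{d-1}}$ lies in $[r_\n,\, r_\n+2L-2]$, since $0 \leq \ell_j \leq L-1$ for every $j$. This interval is contained in $[r_\n,\, r_\n+2L-1]$, which is precisely the range read by $A_\n$ in \eqref{eqn:seqAi}. Hence there is a fixed measurable map $\phi_k$ (depending only on $k$ and on the fixed lag vector $(\ell_1,\ldots,\ell_{d-1})$, and crucially not on $\n$) such that $F\pp{r_\n+k} = \phi_k(A_\n)$. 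For $-L \leq k \leq -1$, the analogous accounting gives indices in $[r_\n-L,\, r_\n+L-2] \subset [r_\n-L,\, r_\n+L-1]$, the range read by $B_\n$ in \eqref{eqn:seqBi}, so $F\pp{r_\n+k} = \psi_k(B_\n)$ for some fixed measurable map $\psi_k$.

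Since $\{A_\n\}$ (resp.\ $\{B_\n\}$) is i.i.d.\ by Lemma \ref{lemma:C00}, and the coordinate-wise image of an i.i.d.\ sequence under a common measurable map is again i.i.d., the sequence $\{F\pp{r_\n+k}\}_{\n=0}^{\N-1}$ is i.i.d.\ in both regimes, which is the claim. No genuine obstacle is anticipated: the substantive content sits entirely in Lemma \ref{lemma:C00}, which uses the well-separated hypothesis to ensure that the slices $A_\n$ (resp.\ $B_\n$) for distinct $\n$ involve disjoint noise entries and independently drawn group elements. The remaining work here is only the index bookkeeping above, together with the standard fact that deterministic maps commute with the i.i.d.\ property.
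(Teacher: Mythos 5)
Your proof is correct and follows essentially the same route as the paper: both express $F[r_\n+k]$ as a fixed (i.e., $\n$-independent) measurable function of $A_\n$ for $k\geq 0$ and of $B_\n$ for $k<0$, and then invoke Lemma~\ref{lemma:C00} together with the fact that a common measurable map preserves the i.i.d.\ property. Your index bookkeeping (largest index $r_\n+k+\ell_{d-1}\leq r_\n+2L-2$ in the first case, and range $[r_\n-L,\,r_\n+L-2]$ in the second) matches the paper's explicit formulas.
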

As a consequence of Lemma~\ref{lemma:C0}, we have the following result:

\begin{corollary} \label{lemma:C1} 
Assume the well-separated MTD model, and recall the definition of $F\pp{\jj}$ in \eqref{eqn: app_C0}. 
Then,  
    \begin{align}
        \frac{1}{LM} & \sum_{\jj=0}^{LM-1} F\pp{\jj} \xrightarrow{\text{a.s.}} 
        \\  \frac{\gamma}{L} & \sum_{k=-L}^{L-1}\mathbb{E} \pp{F\pp{r_0 + k}} 
        \\ & + \p{1 - 2\gamma} \mathbb{E} \pp{\epsilon \pp{0} \epsilon  \pp{\ell_1} ...\epsilon \pp{\ell_{d-1}}}, \label{eqn:app_A11}
    \end{align}
as $\N, M \to \infty$.    
\end{corollary}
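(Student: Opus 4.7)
The plan is to decompose the empirical sum $\frac{1}{LM}\sum_{\jj=0}^{LM-1} F\pp{\jj}$ into a signal-containing contribution and a pure-noise contribution, then apply the strong law of large numbers (SLLN) to each part, invoking Lemma~\ref{lemma:C0}.

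First, I partition $\ppp{0,1,\ldots,LM-1}$ into a signal region $S_{\text{sig}} = \bigcup_{\n=0}^{\N-1} \ppp{r_\n-L, r_\n-L+1, \ldots, r_\n+L-1}$ and its complement $S_{\text{noise}}$. The well-separation condition $|r_{\n_1}-r_{\n_2}|\geq 2L$ guarantees that the $2L$-length blocks composing $S_{\text{sig}}$ are pairwise disjoint, so $|S_{\text{sig}}|=2L\N$ (up to $O(L)$ boundary terms that vanish after division by $LM$), and therefore $|S_{\text{noise}}|/(LM)\to 1-2\gamma$.

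For the signal part, I reindex $\jj = r_\n + k$ with $k\in\ppp{-L,\ldots,L-1}$ to write
\[
\frac{1}{LM}\sum_{\jj\in S_{\text{sig}}} F\pp{\jj} = \sum_{k=-L}^{L-1}\frac{\N}{LM}\cdot \frac{1}{\N}\sum_{\n=0}^{\N-1} F\pp{r_\n+k}.
\]
By Lemma~\ref{lemma:C0}, each sequence $\ppp{F\pp{r_\n+k}}_{\n=0}^{\N-1}$ is i.i.d., and $\mathbb{E}\pp{F\pp{r_0+k}}$ is finite because $G$ is compact (so $\|g\cdot x\|$ is bounded uniformly in $g$) and the Gaussian noise has finite moments of every order. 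The SLLN then yields $\frac{1}{\N}\sum_\n F\pp{r_\n+k}\xrightarrow{\text{a.s.}}\mathbb{E}\pp{F\pp{r_0+k}}$, and since $\N/(LM)\to \gamma/L$ and the outer sum has only $2L$ terms, the signal contribution converges almost surely to $\frac{\gamma}{L}\sum_{k=-L}^{L-1}\mathbb{E}\pp{F\pp{r_0+k}}$.

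For the noise part, every $\jj\in S_{\text{noise}}$ satisfies $\jj+\ell_i\notin\pp{r_\n,r_\n+L-1}$ for all $\n$ and all $\ell_i\in\ppp{0,\ldots,L-1}$, so $F\pp{\jj}=\epsilon\pp{\jj}\epsilon\pp{\jj+\ell_1}\cdots\epsilon\pp{\jj+\ell_{d-1}}$ is a pure polynomial in independent Gaussian entries with constant expectation $\chi := \mathbb{E}\pp{\epsilon\pp{0}\epsilon\pp{\ell_1}\cdots\epsilon\pp{\ell_{d-1}}}$. Because $F\pp{\jj}$ and $F\pp{\jj'}$ share noise entries whenever $|\jj-\jj'|<L$, the sequence is not directly i.i.d. To restore independence, I partition $S_{\text{noise}}$ into $L$ sub-sequences by $\jj \bmod L$: within each residue class, consecutive indices differ by at least $L$, so the windows $\ppp{\jj,\jj+\ell_1,\ldots,\jj+\ell_{d-1}}$ are pairwise disjoint across elements of the class, and the corresponding $F\pp{\jj}$ values are i.i.d.\ with mean $\chi$. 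Since each class has length $\Theta(M)\to\infty$, the SLLN applies to it, and summing over the $L$ classes gives $\frac{1}{LM}\sum_{\jj\in S_{\text{noise}}} F\pp{\jj}\xrightarrow{\text{a.s.}}(1-2\gamma)\chi$. Adding the signal and noise limits yields the stated expression.

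The main obstacle is the noise-part convergence: the natural ordering of $\ppp{F\pp{\jj}}_{\jj\in S_{\text{noise}}}$ does not produce i.i.d.\ terms due to overlapping noise windows, and the residue-class decomposition modulo $L$ is the key device that restores the independence needed for the SLLN.
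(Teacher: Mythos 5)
Your proof is correct and takes essentially the same route as the paper: the identical decomposition of $\ppp{0,\dots,LM-1}$ into the width-$2L$ windows around each $r_\n$ and their complement, with the signal contribution handled by reindexing over $k\in\ppp{-L,\dots,L-1}$ and applying the SLLN to the i.i.d.\ sequences $\ppp{F\pp{r_\n+k}}_\n$ from Lemma~\ref{lemma:C0}. The one place you go beyond the paper is the noise term, which the paper dispatches with ``a similar argument'': your residue-class decomposition modulo $L$ (making the noise windows within each class disjoint, hence i.i.d.) is exactly the missing detail needed to justify that step rigorously.
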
 

The following lemma characterizes the terms $\mathbb{E}\pp{F\pp{r_0 + k}}$ in \eqref{eqn:app_A11} for every $-L \leq k < L$,  with respect to the definitions of the autocorrelation ensemble mean (Definition \ref{def:autoCorrelationMomentsEnsembele}).
\begin{lem} \label{lemma:C2} 
Assume the well-separated MTD model. Recall the definition of the autocorrelation ensemble mean $\bar{a}^{(d)}_{Y,\rho}$ (Definition \ref{def:autoCorrelationMomentsEnsembele}), Then, 
\begin{align}
    \frac{1}{2L} \sum_{k=-L}^{L-1}\mathbb{E} \pp{F\pp{r_0 + k}} = \bar{a}^{(d)}_{Y,\rho},
\end{align}
\end{lem}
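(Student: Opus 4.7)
The statement is essentially bookkeeping: it asserts that averaging $\mathbb{E}[F[r_0+k]]$ over the $2L$ shifts $k\in\{-L,\ldots,L-1\}$ reconstructs exactly the ensemble autocorrelation of $Y$ on $\mathbb{R}^{3L}$. The plan is to split the range of $k$ into $k\ge 0$ and $k<0$, use Lemma~\ref{lemma:C00} to rewrite each $F[r_0+k]$ in terms of $A_0$ or $B_0$, invoke the distributional identifications \eqref{eqn:app_A5}–\eqref{eqn:app_A6} to express both cases through the single vector $Y$, and then reindex so that the $2L$ summands match the $2L$ shifts appearing in the definition of $\bar{a}^{(d)}_{Y,\rho}$.

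\textbf{Key steps.} First I would verify that for the admissible shift ranges $0\le\ell_1,\ldots,\ell_{d-1}\le L-1$, every index appearing in $F[r_0+k]=z[r_0+k]z[r_0+k+\ell_1]\cdots z[r_0+k+\ell_{d-1}]$ lies in the $2L$-window around $r_0$. Concretely, for $k\in\{0,\ldots,L-1\}$ the indices lie in $[r_0,r_0+2L-2]\subseteq[r_0,r_0+2L-1]$, which is the support of $A_0$; for $k\in\{-L,\ldots,-1\}$ they lie in $[r_0-L,r_0+L-2]\subseteq[r_0-L,r_0+L-1]$, the support of $B_0$. Thus I can write
\begin{equation*}
F[r_0+k]=
\begin{cases}
A_0[k]\,A_0[k+\ell_1]\cdots A_0[k+\ell_{d-1}], & 0\le k\le L-1,\\
B_0[k+L]\,B_0[k+L+\ell_1]\cdots B_0[k+L+\ell_{d-1}], & -L\le k\le -1.
\end{cases}
\end{equation*}
Second, I apply \eqref{eqn:app_A5} and \eqref{eqn:app_A6}: $A_0$ has the joint distribution of $(Y[L],\ldots,Y[3L-1])$ and $B_0$ has the joint distribution of $(Y[0],\ldots,Y[2L-1])$. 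Substituting, both cases yield the same formula
\begin{equation*}
\mathbb{E}[F[r_0+k]]=\mathbb{E}\bigl[Y[L+k]\,Y[L+k+\ell_1]\cdots Y[L+k+\ell_{d-1}]\bigr],
\end{equation*}
which is the crucial unification. Third, I sum over $k\in\{-L,\ldots,L-1\}$ and change variables $j=L+k$, so that $j$ runs over $\{0,\ldots,2L-1\}$, obtaining
\begin{equation*}
\sum_{k=-L}^{L-1}\mathbb{E}[F[r_0+k]]=\sum_{j=0}^{2L-1}\mathbb{E}\bigl[Y[j]\,Y[j+\ell_1]\cdots Y[j+\ell_{d-1}]\bigr].
\end{equation*}
Dividing by $2L$ and recognizing the right-hand side as $\bar{a}^{(d)}_{Y,\rho}[\ell_1,\ldots,\ell_{d-1}]$ via Definition~\ref{def:autoCorrelationMomentsEnsembele} closes the argument.

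\textbf{Main obstacle.} There is essentially no analytic difficulty; the only care required is the index bookkeeping to check that the two cases (signal-leading window $A_0$ and signal-trailing window $B_0$) collapse to the \emph{same} expression in $Y$. This works precisely because $\tilde{X}_G=[\mathbf{0}_L,X_G,\mathbf{0}_L]$ is padded symmetrically, so shifting by $L$ inside $Y$ absorbs the distinction between "signal placed in the left half" (encoded by $A_0$) and "signal placed in the right half" (encoded by $B_0$). Once that alignment is spelled out, the substitution $j=L+k$ gives exactly the $2L$ terms defining $\bar{a}^{(d)}_{Y,\rho}$.
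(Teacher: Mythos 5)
Your proposal is correct and follows essentially the same route as the paper's proof: split $k$ into the nonnegative range (handled via $A_0$) and the negative range (handled via $B_0$), use the distributional identifications \eqref{eqn:app_A5}--\eqref{eqn:app_A6} to express both through $Y$, and reindex to match Definition~\ref{def:autoCorrelationMomentsEnsembele}. Your version is in fact slightly more careful than the paper's about the indexing of $B_0$ (the paper writes $B[k]$ for negative $k$ without spelling out the shift by $L$), but the argument is the same.
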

Combining \eqref{eqn:autoCorrelationInFrep}, Lemma \ref{lemma:C1} and Lemma \ref{lemma:C2} completes the proof. It is left to prove the Lemmas.

\begin{proof}[Proof of Lemma~\ref{lemma:C00}]
By the construction of the MTD model \eqref{eqn:MTDmodelGeneral}, the stochastic process arises from two distinct components independent of each other: the group actions $\ppp{g_\n}_{\n=0}^{\N-1} \in G$, drawn from a distribution $\rho$ and acting on $x$, and the additive Gaussian noise, $\epsilon \sim \mathcal{N}\p{0, \sigma^2I}$. In contrast, the signal $x$ and the locations $\ppp{s_\n}_{\n=0}^{\N-1}$ are deterministic but unknown. Thus, it suffices to show that the sequence $\ppp{A_\n}_{\n=0}^{\N-1}$ is i.i.d.\ with respect to these two stochastic processes.

Indeed, by the well-separated assumption, for every $\n_1 \neq \n_2$, $A_{{\n_1}}$ and $A_{{\n_2}}$ do not overlap. Consequently, they are influenced by distinct realizations of the additive Gaussian noise and group actions, $g_{\n_1}$ and $g_{\n_2}$. By the definition of the MTD model, these components are i.i.d., resulting in $A_{{\n_1}}$ and $A_{{\n_2}}$ being independent. 

Similarly, for every $\n_1 \neq \n_2$, $B_{{\n_1}}$ and $B_{{\n_2}}$ do not overlap. Therefore, by the same reasoning as for $A_{\n_1}$ and $A_{\n_2}$, $B_{{\n_1}}$ and $B_{{\n_2}}$ are independent.

Moreover, under the well-separated MTD model, each signal instance $x_\n$ is padded with $L$ zeros on both sides. Thus, by definition, the sequences $\ppp{A_\n}_{\n=0}^{\N-1}$ and $\ppp{B_\n}_{\n=0}^{\N-1}$, as defined in \eqref{eqn:seqAi} and \eqref{eqn:seqBi}, follow the distributions specified in \eqref{eqn:ArV} and \eqref{eqn:BrV}, respectively.

\end{proof}

\begin{proof}[Proof of Lemma~\ref{lemma:C0}]
By the definition of \( F\pp{\jj} \), for every \( 0 \leq k < L \), the term \( F\pp{r_\n + k} \) depends solely on the vector \( A_\n \), as defined in \eqref{eqn:seqAi}:
\begin{align}
    F\pp{r_\n + k} = A_\n \pp{k}A_\n \pp{k+\ell_1} ... A_\n \pp{k + \ell_{d-1}}. \label{eqn:app_A13}
\end{align}

Furthermore, for any \( \n_1 \neq \n_2 \), the dependence of \( F\pp{r_{\n_1} + k} \) on the entries of \( A_{\n_1} \) is identical to the dependence of \( F\pp{r_{\n_2} + k} \) on the entries of \( A_{\n_2} \), i.e.,
\begin{align}
    \nonumber F\pp{r_{\n_1} + k} = A_{\n_1} \pp{k}A_{\n_1} \pp{k+\ell_1} ... A_{\n_1} \pp{k + \ell_{d-1}}. 
    \\ F\pp{r_{\n_2} + k} = A_{\n_2} \pp{k}A_{\n2} \pp{k+\ell_1} ... A_{\n_2} \pp{k + \ell_{d-1}}. 
    \label{eqn:app_A14}
\end{align}
Since \( \{A_\n\}_{\n=0}^{\N-1} \) is an i.i.d sequence (Lemma \ref{lemma:C00}), it follows that \( \{F\pp{r_\n + k}\}_{\n=0}^{\N-1} \) is also an i.i.d. sequence.

Similarly, for every \( -L \leq k < 0 \), the term \( F\pp{r_\n + k} \) depends solely on the vector \( B_\n \), as defined in \eqref{eqn:seqBi}:
\begin{align}
    F\pp{r_\n + k} = B_\n \pp{k}B_\n \pp{k+\ell_1} ... B_\n \pp{k + \ell_{d-1}}. \label{eqn:app_A15}
\end{align}
Additionally, for any \( \n_1 \neq \n_2 \), the dependence of \( F\pp{r_{\n_1} + k} \) on the entries of \( B_{\n_1} \) is identical to the dependence of \( F\pp{r_{\n_2} + k} \) on the entries of \( B_{\n_2} \). Since \( \{B_\n\}_{\n=0}^{\N-1} \) is an i.i.d. sequence, it follows that \( \{F\pp{r_\n + k}\}_{\n=0}^{\N-1} \) is also an i.i.d. sequence. Thus, for every $-L \leq k < L-1$,  \( \{F\pp{r_\n + k}\}_{\n=0}^{\N-1} \) is an i.i.d. sequence, completing the proof.
\end{proof}

\begin{proof}[Proof of Corollary~\ref{lemma:C1}]
We define the set,
\begin{align}
    \mathcal{C}_M^{(\n)} = \ppp{m \in \pp{LM}: \abs{m - r_\n} < L} \subset \pp{LM},
\end{align} 
i.e., the indices that are located at most $L$ indices relative to the beginning of the signal $x_i$. In addition, we define
\begin{align}
    \mathcal{C}_M = \bigcup_{\n=0}^{\N-1} \mathcal{C}_M^{(\n)} ,
\end{align}
which represents all the indices of the signal $y$ that are located at most $L$ indices relative to the beginning of any one of the signals $\ppp{x_\n}_{\n=0}^{\N-1}$. Note that $\abs{\mathcal{C}_M^{(i)}}/L = 2$.
Therefore, we have,
\begin{align}
    \sum_{\jj=0}^{LM-1} & F\pp{\jj} = \sum_{\jj \in \mathcal{C}_M} F\pp{\jj} + \sum_{\jj \in \mathcal{C}_M^c} F\pp{\jj} 
    \\ & = \sum_{k=-L}^{L-1} \sum_{\n=0}^{N-1} F\pp{r_\n + k} + \sum_{\jj \in \mathcal{C}_M^c} F\pp{\jj}. \label{eqn:splittingToDiffernetSums}
\end{align}

Following Lemma \ref{lemma:C0}, $\ppp{F\pp{r_\n + k}}_\n$ form an i.i.d sequence, for every $-L \leq k < L$. Thus, we may apply the strong law of large numbers (SLLN) for $\N, M \to \infty$,
\begin{align}
    \frac{1}{LM} \sum_{\n=0}^{\N-1} F\pp{r_\n + k} \xrightarrow{\text{a.s.}}  \frac{\gamma}{L} \mathbb{E} \pp{F\pp{r_0 + k}}, \label{eqn:A17}
\end{align}
where $\gamma = \N/M$.

As \eqref{eqn:A17} holds for every $-L \leq k < L-1$, we use the general fact that for sequences $\ppp{X_n}_{n \in \mathbb{N}}$, $\ppp{Y_n}_{n \in \mathbb{N}}$, if $X_n \xrightarrow{\text{a.s.}} X$, and $Y_n \xrightarrow{\text{a.s.}} Y$, then $X_n + Y_n \xrightarrow{\text{a.s.}} X + Y$, to show that,
\begin{align}
    \sum_{k=-L}^{L-1} & \frac{1}{LM} \sum_{\n=0}^{N-1} F\pp{r_\n + k} \xrightarrow{\text{a.s.}}  
    \\ & \frac{\gamma}{L}  \sum_{k=-L}^{L-1}\mathbb{E} \pp{F\pp{r_0 + k}}. \label{eqn:A19}
\end{align}

Finally, the last term in \eqref{eqn:splittingToDiffernetSums} can be expressed by,
\begin{align}
    \sum_{\jj \in \mathcal{C}^c} F\pp{\jj} = \sum_{\jj \in \mathcal{C}^c} \epsilon \pp{\jj} \epsilon  \pp{\jj+\ell_1} ...\epsilon \pp{\jj+\ell_{d-1}}.
\end{align}
By applying a similar argument as before, it is clear that
\begin{align}
    \frac{1}{LM} & \sum_{\jj \in \mathcal{C}^c} \epsilon \pp{\jj} \epsilon  \pp{\jj+\ell_1} ...\epsilon \pp{\jj+\ell_{d-1}} \xrightarrow{\text{a.s.}}
    \\ & \p{1 - 2\gamma} \mathbb{E} \pp{\epsilon \pp{0} \epsilon  \pp{\ell_1} ...\epsilon \pp{\ell_{d-1}}}, \label{eqn:A22}
\end{align}
as $M \to \infty$. Substituting \eqref{eqn:A19} and \eqref{eqn:A22} into \eqref{eqn:splittingToDiffernetSums} completes the proof.
\end{proof}

\begin{proof}[Proof of Lemma~\ref{lemma:C2}]
By the definition of $F\pp{\jj}$ \eqref{eqn: app_C0}, and the indices $\ppp{r_\n}_{\n=0}^{\N-1}$, which mark the starting position of the $\n$-th signal, for every \( 0 \leq k < L \), the term \( F\pp{r_\n + k} \) depends solely on the vector \( A_\n \), as defined in \eqref{eqn:seqAi}. In addition, for every \( -L \leq k < 0 \), the term \( F\pp{r_\n + k} \) depends solely on the vector \( B_\n \), as defined in \eqref{eqn:seqBi}. 

Recall the definitions of the random vectors $A$ \eqref{eqn:ArV}, and $B$ \eqref{eqn:BrV}, and their relation to the ensemble random vector $Y$ \eqref{eqn:ensembeleRV}, given in \eqref{eqn:app_A23} and \eqref{eqn:app_A24}.
Then, from \eqref{eqn:app_A13} and \eqref{eqn:app_A15}, the following holds, 
\begin{enumerate}
    \item For every $0 \leq k \leq L-1$,
        \begin{align}
            \nonumber \mathbb{E} & \pp{F\pp{r_0 + k}} =
            \\ & \mathbb{E} \pp{A\pp{k} A\pp{k+\ell_1} ...A\pp{k+\ell_{d-1}}}. \label{eqn:app_A23}
        \end{align}
    \item For every $-L \leq k < 0$,
        \begin{align}
            \nonumber \mathbb{E} & \pp{F\pp{r_0 + k}} =
            \\ & \mathbb{E} \pp{B\pp{k} B\pp{k+\ell_1} ...B\pp{k+\ell_{d-1}}}. \label{eqn:app_A24}
        \end{align}
\end{enumerate}
Combining \eqref{eqn:app_A23} and \eqref{eqn:app_A24} with \eqref{eqn:app_A5} and \eqref{eqn:app_A6}, and summing over $-L \leq k < L$ completes the proof.
\end{proof}

\subsection{Proof of Proposition \ref{thm:prop1}}
To establish this proposition, our objective is to show that the autocorrelations uniquely determine the orbit, with a sample complexity of $\omega(\sigma^{2\bar{d}})$.
There are two key points in this proposition:
\begin{enumerate}
    \item In proposition \ref{thm:prop0}, we addressed the convergence of each entry in the empirical autocorrelation individually. Now, we must examine the convergence of the entire empirical autocorrelation tensor (in the Frobenius norm) to the ensemble mean.

    \item Even if the empirical autocorrelation converges to the ensemble mean with a sample complexity of $\omega \p{\sigma^{2d}}$, this does not guarantee that the orbit estimator converges at the same sample complexity to the orbit of the signal $x$. To achieve this, it is essential that some additional mild conditions would hold, as proved in Lemma \ref{lemma:D3}.
\end{enumerate}

Formally, let \( z \in \mathbb{R}^{LM} \) represent observations sampled from the well-separated MTD model~\eqref{eqn:MTDmodelGeneral}, and denote by $a_z^{({d})}$ the empirical ${d}$-order  autocorrelation as defined in \eqref{eqn:autoCorrealtionMoments}. By Proposition \ref{thm:prop0}, the empirical autocorrelation $a_z^{({d})}$ converge almost surely to
\begin{align}
    \nonumber \lim_{\N,M \to \infty} & {a}^{(d)}_{z} [\ell_1, \ell_2, ..., \ell_{d-1}]
    \\ \nonumber \equalityAS & \ 2 \gamma \bar{a}^{(d)}_{Y,\rho} [\ell_1, \ell_2, ..., \ell_{d-1}]
    \\ & + \p{1-2 \gamma} \chi [\ell_1, \ell_2, ..., \ell_{d-1}], \label{eqn:A23}
\end{align}
for every $0 \leq \ell_1, \ell_2, \ell_{d-1} \leq L-1$ where $\bar{a}_{Y, \rho}^{({d})}$ is the autocorrelation ensemble mean, and $\chi [\ell_1, \ell_2, ..., \ell_{d-1}]$ is a bias term independent of $x$. 
Let us denote by $\bar{\mu}_{G \cdot x}^{(d)} [\ell_1, \ell_2, ..., \ell_{d-1}]$ the right-hand-side of \eqref{eqn:A23}:
\begin{align}
    \nonumber \bar{\mu}_{G \cdot x}^{(d)} [\ell_1, \ell_2, ..., \ell_{d-1}] & = 
     2 \gamma \bar{a}^{(d)}_{Y,\rho} [\ell_1, \ell_2, ..., \ell_{d-1}]
    \\ & + \p{1-2 \gamma} \chi [\ell_1, \ell_2, ..., \ell_{d-1}] \label{eqn:app_A29}
\end{align}
Then, we have the following corollary, which is proved at the end of the section.

\begin{corollary} \label{lemma:D1} 
Let ${a}^{(d)}_{z} [\ell_1, \ell_2, ..., \ell_{d-1}]$ be a single entry $\pp{\ell_1, \ell_2, \dots \ell_{d-1}}$ of the empirical autocorrelation of $d$ order (Definition \ref{def:autoCorrelationMomentsEmpirical}), and $\bar{\mu}_{G \cdot x}^{(d)} [\ell_1, \ell_2, ..., \ell_{d-1}]$ defined by \eqref{eqn:app_A29}. Then,
\begin{align}
    \mathbb{P} \ppp{\abs{{a}^{(d)}_{z} - \bar{\mu}_{G \cdot x}^{(d)}} \geq \delta} \leq O \p{\frac{\sigma^{2d}}{N\delta^2}}. \label{eqn:A29}
\end{align}
\end{corollary}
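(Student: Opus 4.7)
The plan is to apply Chebyshev's inequality at a single entry, so I need two ingredients: (i) the mean $\E[a_z^{(d)}]$ equals $\bar{\mu}_{G\cdot x}^{(d)}$ exactly, and (ii) the variance of $a_z^{(d)}$ scales as $O(\sigma^{2d}/\N)$. Together they give
\[ \mathbb{P}\ppp{\abs{a_z^{(d)}-\bar{\mu}_{G\cdot x}^{(d)}}\geq \delta}\leq \mathrm{Var}\pp{a_z^{(d)}}/\delta^2 = O\p{\sigma^{2d}/(\N\delta^2)}. \]

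First I would reuse the decomposition used in the proof of Proposition \ref{thm:prop0}, splitting the sum defining $a_z^{(d)}$ according to whether the index $\jj$ lies within $L$ of some signal start:
\[ a_z^{(d)} = \underbrace{\frac{1}{LM}\sum_{k=-L}^{L-1}\sum_{\n=0}^{\N-1} F\pp{r_\n+k}}_{T_1} \;+\; \underbrace{\frac{1}{LM}\sum_{\jj\in\mathcal{C}_M^c} F\pp{\jj}}_{T_2}. \]
Taking expectations and invoking Lemma \ref{lemma:C2} yields $\E[T_1] = \tfrac{\N}{LM}\sum_k \E[F[r_0+k]] = 2\gamma\,\bar{a}^{(d)}_{Y,\rho}$, while $\E[T_2] = \tfrac{\abs{\mathcal{C}_M^c}}{LM}\chi = (1-2\gamma)\chi$ since on $\mathcal{C}_M^c$ only the noise contributes. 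Summing recovers $\bar{\mu}_{G\cdot x}^{(d)}$ exactly, so ingredient (i) holds with no finite-sample error.

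Second, I would bound the variances of $T_1$ and $T_2$ separately. For $T_1$, Lemma \ref{lemma:C0} guarantees that $\{F[r_\n+k]\}_{\n}$ is i.i.d.\ for each fixed $k$; in the high-noise regime the second moment of a single summand is dominated by the pure-noise $d$-fold product, which by Isserlis' theorem is $\Theta(\sigma^{2d})$. Summing $\N$ i.i.d.\ terms and combining the $2L$ values of $k$ via Cauchy--Schwarz gives $\mathrm{Var}[T_1]=O(L^2\N\sigma^{2d}/(LM)^2)=O(\sigma^{2d}/\N)$, since $\gamma=\N/M$ and $L$ are fixed constants. For $T_2$, the products $F[\jj]$ and $F[\jj']$ are independent as soon as their index windows of length $L$ are disjoint, so at most $O(L)$ neighbors contribute to the covariance at each $\jj$; this yields $\mathrm{Var}[T_2]=O(L^2 M\sigma^{2d}/(LM)^2)=O(\sigma^{2d}/M)=O(\sigma^{2d}/\N)$. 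The cross-covariance $\mathrm{Cov}(T_1,T_2)$ is controlled by the geometric mean of the two variances and adds at the same rate. Substituting into Chebyshev yields the claim.

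The delicate step will be the variance bookkeeping for the $d$-fold noise products: one has to verify that in the regime $\sigma\to\infty$ the variance of a product of $d$ centered Gaussians is $\Theta(\sigma^{2d})$ (a clean Wick-formula computation) and, for $T_2$, to enumerate carefully which pairs of summands are correlated so that the $O(L)$ neighborhood bound on the covariance truly holds uniformly. Once this is in place, the concentration bound follows from a single application of Chebyshev's inequality.
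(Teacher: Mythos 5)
Your proposal is correct and takes essentially the same route as the paper: a single application of Chebyshev's inequality to one entry, combined with the bound $\E\bigl[({a}^{(d)}_{z}-\bar{\mu}_{G\cdot x}^{(d)})^2\bigr]=O(\sigma^{2d}/N)$. In fact you supply more detail than the paper does: the paper simply asserts the variance bound by appeal to Corollary~\ref{lemma:C1}, whereas you explicitly verify that $\E[a^{(d)}_z]=\bar{\mu}_{G\cdot x}^{(d)}$ via the $T_1/T_2$ decomposition and carry out the $O(\sigma^{2d}/N)$ variance bookkeeping (i.i.d.\ terms for $T_1$, the $O(L)$-neighborhood dependence for $T_2$, Cauchy--Schwarz for the cross term), which is exactly the justification the paper leaves implicit.
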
 

Before the next corollary, we define the Frobenius norm between the empirical autocorrelation ${a}^{(d)}_{z}$, and the tensor $\bar{\mu}_{G \cdot x}^{(d)}$, defined by the entries of \eqref{eqn:app_A29}:
\begin{align}
    \nonumber & \norm{{a}^{(d)}_{z} - \bar{\mu}_{G \cdot x}^{(d)}}_F^2 \triangleq \\ & \sum_{\ell_1, \dots \ell_{d-1} = 0}^{L-1} \p{{a}^{(d)}_{z} [\ell_1, ..., \ell_{d-1}] - \bar{\mu}_{G \cdot x}^{(d)} [\ell_1, ..., \ell_{d-1}]}^2. \label{eqn:A30}
\end{align}
Then, we have the following corollary.
\begin{corollary} \label{lemma:D2} 
Let ${a}^{(d)}_{z}$ be the empirical autocorrelation of $d$ order (Definition \ref{def:autoCorrelationMomentsEmpirical}), and $\bar{\mu}_{G \cdot x}^{(d)}$ a tensor composed of entries defined by \eqref{eqn:app_A29}. Then, 
\begin{align}
    \mathbb{P} \ppp{\norm{{a}^{(d)}_{z} - \bar{\mu}_{G \cdot x}^{(d)}}_F \geq \delta} \leq O \p{\frac{\sigma^{2d} L^{d-1}}{N\delta^2}}. \label{eqn:A32}
\end{align}
\end{corollary}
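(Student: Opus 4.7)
\textbf{Proof plan for Corollary \ref{lemma:D2}.} The strategy is to reduce the Frobenius-norm bound to the per-entry second-moment estimate that underlies Corollary \ref{lemma:D1}, and then sum over the $L^{d-1}$ entries of the autocorrelation tensor. A naive union bound approach—requiring some entry to exceed $\delta/L^{(d-1)/2}$ and applying Corollary \ref{lemma:D1} entrywise—would inflate $L^{d-1}$ to $L^{2(d-1)}$. Instead, I will apply Markov's inequality directly to the \emph{squared} Frobenius norm, which linearizes the sum and preserves the desired $L^{d-1}$ factor.

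\textbf{Step 1.} Apply Markov's inequality to the non-negative random variable $\norm{a^{(d)}_z - \bar\mu^{(d)}_{G \cdot x}}_F^2$:
\begin{equation*}
    \mathbb{P}\ppp{\norm{a^{(d)}_z - \bar\mu^{(d)}_{G \cdot x}}_F \geq \delta} = \mathbb{P}\ppp{\norm{a^{(d)}_z - \bar\mu^{(d)}_{G \cdot x}}_F^2 \geq \delta^2} \leq \frac{\mathbb{E}\pp{\norm{a^{(d)}_z - \bar\mu^{(d)}_{G \cdot x}}_F^2}}{\delta^2}.
\end{equation*}

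\textbf{Step 2.} Expand the squared Frobenius norm using the definition \eqref{eqn:A30} and swap expectation and summation:
\begin{equation*}
    \mathbb{E}\pp{\norm{a^{(d)}_z - \bar\mu^{(d)}_{G \cdot x}}_F^2} = \sum_{\ell_1,\dots,\ell_{d-1} = 0}^{L-1} \mathbb{E}\pp{\p{a^{(d)}_z[\ell_1,\dots,\ell_{d-1}] - \bar\mu^{(d)}_{G \cdot x}[\ell_1,\dots,\ell_{d-1}]}^2}.
\end{equation*}
The sum contains exactly $L^{d-1}$ terms.

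\textbf{Step 3.} Bound each term by $O(\sigma^{2d}/N)$. This is precisely the second-moment (variance) estimate that yields Corollary \ref{lemma:D1} via Chebyshev's inequality: the empirical autocorrelation entry is an average over $LM = N/\gamma$ i.i.d.\ contributions (up to the well-separated decomposition used in the proof of Proposition \ref{thm:prop0}) of a degree-$d$ polynomial in Gaussian noise of variance $\sigma^2$, so its variance scales as $\sigma^{2d}/N$. Assuming that the per-entry second moment of the deviation matches this scaling (i.e., that the bias, if any, is negligible relative to the variance), each summand is $O(\sigma^{2d}/N)$.

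\textbf{Step 4.} Combine: the expected squared Frobenius norm is bounded by $L^{d-1} \cdot O(\sigma^{2d}/N) = O(\sigma^{2d} L^{d-1}/N)$, and dividing by $\delta^2$ yields the claim \eqref{eqn:A32}.

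\textbf{Main obstacle.} The proof itself is mostly bookkeeping; the only nontrivial ingredient is the per-entry second-moment bound of order $\sigma^{2d}/N$, which is the quantitative statement implicit in Corollary \ref{lemma:D1}. Making this step fully rigorous requires verifying that the variance computation underlying Corollary \ref{lemma:D1} indeed produces second-moment control (not just a tail bound), and that the bias $\mathbb{E}[a^{(d)}_z] - \bar\mu^{(d)}_{G \cdot x}$ does not spoil the rate. Both points follow from the i.i.d.\ decomposition in the proof of Proposition \ref{thm:prop0} together with standard Gaussian-moment estimates for products of up to $d$ noise entries.
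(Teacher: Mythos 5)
Your proposal matches the paper's proof essentially verbatim: the paper also applies Markov's inequality to the squared Frobenius norm, expands it via the entrywise definition \eqref{eqn:A30}, and bounds each of the $L^{d-1}$ summands by the per-entry second-moment estimate $O(\sigma^{2d}/N)$ from \eqref{eqn:A25}. Your additional remarks (avoiding the union bound, and the caveat that the per-entry bound must control the full second moment rather than just a tail) are consistent with, and slightly more careful than, the paper's own one-line argument.
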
 

We are now ready to state the main result of this proposition:
\begin{lem} \label{lemma:D3} 
Assume the ensemble means of the autocorrelations (Definition \ref{def:autoCorrelationMomentsEnsembele}) up to order \(\bar{d}\) almost everywhere determine the orbit of the group action \(G\). Assume that the parameter space $x \in \Theta$ of the model is compact. Recall the definition of $\s{MSE}^*_{\s{MTD}}(\sigma^2, \N)$, as defined in \eqref{eqn:sampleComplexityMTD}. Then, for  $\N = \omega\p{\sigma^{2\bar{d}}}$, 
\begin{align}
    \lim_{\N \to \infty} \s{MSE}^*_{\s{MTD}}(\sigma^2, \N) = 0. \label{eqn:A33}
\end{align}
\end{lem}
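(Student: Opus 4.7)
The plan is to construct an explicit moment-matching estimator from the autocorrelations up to order $\bar d$ and prove its consistency by combining Corollary~\ref{lemma:D2} with a standard argmin-consistency argument for M-estimators.

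\textbf{Estimator and concentration.} Because $\gamma$ and $\sigma^2$ are known, the Gaussian moment $\chi^{(d)}$ appearing in \eqref{eqn:app_A29} is computable, so from each empirical autocorrelation $a_z^{(d)}$ I form the bias-corrected statistic $\hat a_z^{(d)} := (2\gamma)^{-1}\!\big(a_z^{(d)} - (1-2\gamma)\chi^{(d)}\big)$, whose population target is $\bar a_{Y,\rho}^{(d)}(x)$. I would then define
\[ \hat x_N \in \argmin_{x' \in \Theta} \sum_{d=1}^{\bar d} \norm{\hat a_z^{(d)} - \bar a_{Y,\rho}^{(d)}(x')}_F^2, \]
which exists because $x' \mapsto \bar a_{Y,\rho}^{(d)}(x')$ is polynomial (hence continuous) and $\Theta$ is compact. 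Since $N = \omega(\sigma^{2\bar d})$ and $\sigma \to \infty$, for every $d \leq \bar d$ the bound $\sigma^{2d}/N \to 0$ holds, so Corollary~\ref{lemma:D2} yields $\hat a_z^{(d)} \xrightarrow{\P} \bar a_{Y,\rho}^{(d)}(x)$ for each $d \leq \bar d$.

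\textbf{From concentration to orbit consistency.} Let $\Phi:\Theta \to \prod_{d=1}^{\bar d} \R^{L^{d-1}}$ be the continuous map $\Phi(x')=\big(\bar a_{Y,\rho}^{(d)}(x')\big)_{d \leq \bar d}$, and define the population criterion $Q(x') := \norm{\Phi(x') - \Phi(x)}^2$, which is continuous, nonnegative, and---by the identifiability hypothesis, invoked at the true $x$ that is assumed to lie in the full-identifiability set---vanishes exactly on $G \cdot x$. Compactness of $\Theta$ and of $G$ (so $G\cdot x$ is closed) yields the quantitative well-separation property $\dist(x', G \cdot x) \geq \eta \Rightarrow Q(x') \geq \tau(\eta) > 0$ for every $\eta > 0$. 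Uniform continuity of $\Phi$ on the compact set $\Theta$ then propagates the concentration above into uniform (in $x'$) convergence in probability of the empirical criterion $\hat Q_N(x') := \sum_d \norm{\hat a_z^{(d)} - \bar a_{Y,\rho}^{(d)}(x')}_F^2$ to $Q(x')$ up to an additive random constant that does not depend on $x'$. The classical argmin-consistency lemma \cite{newey1994chapter,schervish2012theory} therefore gives $\dist(\hat x_N, G \cdot x) \xrightarrow{\P} 0$. Compactness of $\Theta$ further implies that $\s{MSE}(\hat x_N, x)$ is uniformly bounded, so by bounded convergence this upgrades to $\E[\s{MSE}(\hat x_N, x)] \to 0$, and since $\s{MSE}^*_{\s{MTD}}(\sigma^2, N) \leq \E[\s{MSE}(\hat x_N, x)]$ by definition, \eqref{eqn:A33} follows.

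\textbf{Main obstacle.} The delicate step is upgrading the qualitative identifiability of $\Phi$ on the orbit quotient to the quantitative well-separation bound $\dist(x', G \cdot x) \geq \eta \Rightarrow Q(x') \geq \tau(\eta)$. This uses compactness of both $\Theta$ and $G$ in an essential way, and the ``almost everywhere'' caveat in the hypothesis forces one to assume implicitly that the true $x$ is a point of full identifiability, a genericity condition worth flagging. Working coherently with the quotient topology on $\Theta/G$ and pushing uniform convergence through $\Phi$ require some care, but everything else in Steps 1--2 and 4 is routine once the concentration bound of Corollary~\ref{lemma:D2} is available.
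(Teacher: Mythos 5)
Your proposal is correct and follows the same underlying strategy as the paper: both are extremum-estimator consistency arguments in the style of \cite[Theorem 2.1]{newey1994chapter}, resting on the four ingredients of identifiability, compactness of $\Theta$, continuity of the population moments in $x$, and the convergence supplied by Corollary~\ref{lemma:D2} under $\N=\omega\p{\sigma^{2\bar d}}$. The differences are in execution, and they favor your version. The paper phrases the estimator as a maximizer of $\hat Q_M\p{z;G\cdot x}$, described as ``the distribution of'' the vector of empirical autocorrelations, which is not a well-defined objective; your minimum-distance criterion $\sum_d\norm{\hat a_z^{(d)}-\bar a^{(d)}_{Y,\rho}(x')}_F^2$, with the explicit bias correction using the known $\gamma$ and $\sigma^2$, is the concrete object the paper presumably intends. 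More substantively, the paper stops at $\widehat{G\cdot x}\xrightarrow{\mathcal{P}}G\cdot x$ and declares the lemma proved, whereas the lemma asserts convergence of $\s{MSE}^*_{\s{MTD}}$, an expectation; your final step---uniform boundedness of the loss on the compact $\Theta$ plus bounded convergence---is exactly the missing bridge. You also correctly flag two points the paper leaves implicit: the ``almost everywhere'' hypothesis forces a genericity assumption on the true $x$, and the quantitative well-separation of the population criterion (where compactness of $\Theta$ and closedness of the orbit are actually used) is the step that converts bare identifiability into consistency.
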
 
Lemma \ref{lemma:D3} completes the proof of the proposition. It remains to prove the corollaries. 

\begin{proof}[Proof of Corollary~\ref{lemma:D1}]
By Chebyshev's inequality, for a \textit{single} entry of the empirical autocorrelation $0 \leq \ell_1, \ell_2, ... \ell_{d-1} \leq L-1$, we have,
\begin{align}
    \mathbb{P} \ppp{\abs{{a}^{(d)}_{z} - \bar{\mu}_{G \cdot x}^{(d)}} \geq \delta} \leq \frac{\mathbb{E} \pp{\p{{a}^{(d)}_{z} - \bar{\mu}_{G \cdot x}^{(d)}}^2}}{\delta^2}. \label{eqn:A24}
\end{align}

As established in Lemma \ref{lemma:C1}, the variance of the autocorrelation is bounded above by \( O\p{\sigma^{2d}} \), since the number of Gaussian noise terms multiplied is at most \( d \). Consequently, we have:  
\begin{align}
    \mathbb{E} \pp{\abs{{a}^{(d)}_{z} - \bar{\mu}_{G \cdot x}^{(d)}}^2} = O \p{\frac{\sigma^{2d}}{N}}. \label{eqn:A25}
\end{align}
Substituting \eqref{eqn:A25} into \eqref{eqn:A24},
\begin{align}
    \mathbb{P} \ppp{\abs{{a}^{(d)}_{z} - \bar{\mu}_{G \cdot x}^{(d)}} \geq \delta} \leq O \p{\frac{\sigma^{2d}}{N\delta^2}}. \label{eqn:A26}
\end{align}
Thus, the MSE between a single entry of the empirical autocorrelation of order \( d \), ${a}^{(d)}_{z} [\ell_1, \ell_2, ..., \ell_{d-1}]$,  and the corresponding right-hand side of \eqref{eqn:A23}, $\bar{\mu}_{G \cdot x}^{(d)} [\ell_1, \ell_2, ..., \ell_{d-1}]$, is \( O\p{\sigma^{2d}/N} \). 
\end{proof}

\begin{proof}[Proof of Corollary~\ref{lemma:D2}]
To establish the sample complexity for the entire empirical autocorrelation tensor entries in Frobenius norm (comprising $L^{d-1}$ entries of the empirical autocorrelation of order $d$), we employ the Markov's inequality, \eqref{eqn:A30}, and \eqref{eqn:A25} to establish:
\begin{align}
        \nonumber \mathbb{P} & \ppp{\norm{{a}^{(d)}_{z} - \bar{\mu}_{G \cdot x}^{(d)}}_F^2 \geq \delta^2} \leq 
         \\ \nonumber & \leq {\delta^{-2}} \ \mathbb{E} \pp{\norm{{a}^{(d)}_{z} - \bar{\mu}_{G \cdot x}^{(d)}}_F^2}
        \\ & \leq  O \p{\frac{\sigma^{2d}L^{d-1}}{N\delta^2 }}. \label{eqn:A26_2}
\end{align}
\end{proof}

\begin{proof}[Proof of Lemma~\ref{lemma:D3}]
To prove the lemma, we reformulate the problem as a likelihood estimation problem of the orbit $G \cdot x$ from all the empirical autocorrelations up to order $\bar{d}$, and apply well-established results from the literature on likelihood estimation \cite{newey1994chapter}.  
Under the assumptions of the proposition, the ensemble means of the autocorrelations up to order \(\bar{d}\) determine almost everywhere the orbit of the group action \(G\). Let \( z \in \mathbb{R}^{LM} \) denote an observation sampled from the well-separated MTD model~\eqref{eqn:MTDmodelGeneral}.

Let \( h: \mathbb{R}^{LM} \to U \) denote the mapping between \( z \) to the empirical autocorrelations up to order \( \bar{d} \), defined as:  
\begin{align}
    h\p{z; G \cdot x} = \p{{a}^{(\bar{d})}_{z (G \cdot x)}, {a}^{(\bar{d}-1)}_{z (G \cdot x)}, \dots, {a}^{(1)}_{z (G \cdot x)}}, \label{eqn:A_37}
\end{align}  
where \( G \cdot x \) explicitly represents the dependence of \( z \) on the orbit \( G \cdot x \).  

Let \( \hat{Q}_M\p{z; G \cdot x} \) denote the distribution of \( h\p{z; G \cdot x} \), where \( M \) indicates its dependence on the observation length \( z \). Then, we state the estimation problem of the orbit of $x$ as follows: 
\begin{align}
    \widehat{G \cdot x} = \argmax_{G \cdot x, x \in \Theta} \hat{Q}_M\p{z; G \cdot x},
\end{align}
that is, the orbit of $x$ which maximizes the likelihood function $\hat{Q}_M\p{z; G \cdot x}$.

We note the following properties of $\hat{Q}_M\p{z; G \cdot x}$:
\begin{enumerate}
    \item \textit{Identification} - By the assumption that the orbit is uniquely defined by all the autocorrelations up to order $\bar{d}$, it means that $\hat{Q}_M\p{z; G \cdot x}$ is identifiable, that is, for every $x \neq x_0$, $\hat{Q}_M\p{z; G \cdot x} \neq \hat{Q}_M\p{z; G \cdot x_0}$.
    \item \textit{Convergence of $\hat{Q}_M$} - By corollary \ref{lemma:D2}, for $N = \omega \p{\sigma^{2\bar{d}}}$, as $N \to \infty$,
        \begin{align}
           \hat{Q}_M\p{z; G \cdot x} \xrightarrow{\text{a.s.}} Q_0\p{G \cdot x}, \label{eqn:app_A40}
        \end{align}
    where $ Q_0\p{G \cdot x} $ indicates the expected value of the right-hand-side of \eqref{eqn:A_37}, given by $\p{\bar{\mu}_{G \cdot x}^{(\bar{d})}, \dots \bar{\mu}_{G \cdot x}^{({1})}}$, where $\bar{\mu}_{G \cdot x}^{({d})}$ is defined by \eqref{eqn:app_A29}.
    \item \textit{Continuity of $\hat{Q}_M \p{z; G\cdot x}$ in $x$} - Since $\hat{Q}_M$ is polynomial in $x$, it is also continuous in $x$.
    \item \textit{Compactness} - By assumption, the parameter space $\Theta$ is compact.
\end{enumerate}

Then, by applying ~\cite[Theorem 2.1]{newey1994chapter}, where all the condition of the theorem are satisfied, we obtain the result:
\begin{align}
    \widehat{G \cdot x} \xrightarrow{\mathcal{P}} G \cdot x. \label{eqn:A_41}
\end{align}
Since \eqref{eqn:app_A40} holds for $N = \omega \p{\sigma^{2 \bar{d}}}$, \eqref{eqn:A_41} also holds under this condition, which completes the proof of the lemma. 
\end{proof}

\subsection{Proof of Proposition \ref{thm:prop2}}
The key to proving the proposition is to construct a measurable mapping \( h \) between the MRA model and the corresponding MTD model, such that the image of this mapping captures the statistics of the MTD model. If such a mapping exists, it implies that the MRA model contains at least as much statistical information as the MTD model. As a result, the MSE of the MTD model is lower-bounded by the MSE of the MRA model.

Formally, denote $N$ i.i.d.\ random vectors $\ppp{Z_\n}_{\n=0}^{\N-1}$, where $ Z_\n \in \mathbb{R}^L$ are instances of the MRA model defined in \eqref{eqn:MRA_model} 
Then, we define a map between $\N$ MRA instances to the MTD observation by $Y = h\p{\ppp{Z_\n}_{\n=0}^{\N-1}} \in \mathbb{R}^{LM}$, as follows,
\begin{enumerate}
    \item Initialize $Y = 0$.
    \item Place the MRA instances $\ppp{Z_\n}_{\n=0}^{\N-1}$ within the vector $Y$ in any valid non-overlapping configuration (Definition \ref{def:wellSeperatedModel}).
    \item In all the remaining indices of $Y$, which do not contain the instances $\ppp{Z_\n}_{\n=0}^{\N-1}$, add additive i.i.d random noise with variance $\sigma^2$.
\end{enumerate}
Clearly, the map defined above is a measurable map and defines the non-overlapping MTD model in \eqref{eqn:MTDmodelGeneral}, according to \textit{any} unknown locations $\ppp{s_\n}_{\n=0}^{\N-1}$. Lemma \ref{lemma:1}, proved below, would show that
\begin{align}
    \s{MSE}^*_{\s{MRA}}(\sigma^2, \N) \leq  \s{MSE}^*_{\s{MTD}}(\sigma^2, \N), \label{eqn:app_B0}
\end{align}
where $\s{MSE}^*_{\s{MRA}}(\sigma^2, \N)$ is the minimal mean square error of any estimator of the MRA model, and $\s{MSE}^*_{\s{MTD}}(\sigma^2, \N)$ is defined in \eqref{eqn:infMtdMSE}.

\begin{lem} \label{lemma:1} 
 Let $\ppp{U_i}_{i=0}^{n-1} \sim f_U \p{\cdot ; \theta}$, and $\ppp{V_i}_{i=0}^{n-1} \sim f_V \p{\cdot ; \theta}$, two sets of observations following the distribution laws $f_U$, and $f_V$ respectively, which are governed by the same parameter $\theta$ from the a parameter space $\theta \in \Theta$. Let $\hat{\theta}_U$, and $\hat{\theta}_V$ be the estimators of $\theta$ by the observations $\ppp{U_i}_{i=0}^{n-1}$ and $\ppp{V_i}_{i=0}^{n-1}$, respectively. Then, if $V = h\p{U}$ for a measurable function $h$, we have, 
\begin{align}
    \nonumber \inf_{\hat{\theta}_U} \mathbb{E}_U & \pp{ \norm{\hat{\theta}_U \p{\ppp{U_i}} - \theta}^2} 
    \\ & \leq \inf_{\hat{\theta}_V} \mathbb{E}_V \pp{ \norm{\hat{\theta}_V \p{\ppp{V_i}} - \theta}^2}. \label{eqn:app_B1}
\end{align}
\end{lem}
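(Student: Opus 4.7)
The plan is to exhibit a data-processing style reduction: for every estimator based on the $V$-observations, I will construct an estimator based on the $U$-observations that achieves exactly the same MSE. Since $V = h(U)$ with $h$ measurable, any processing chain on $V$ can be realized as a processing chain on $U$ by prepending $h$, so the infimum over $U$-estimators cannot exceed the infimum over $V$-estimators.

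Concretely, fix an arbitrary measurable estimator $\hat{\theta}_V$ acting on $\{V_i\}_{i=0}^{n-1}$ and define
\begin{align}
\tilde{\theta}_U\bigl(\{U_i\}_{i=0}^{n-1}\bigr) := \hat{\theta}_V\bigl(\{h(U_i)\}_{i=0}^{n-1}\bigr).
\end{align}
Because $h$ is measurable and $\hat{\theta}_V$ is measurable, $\tilde{\theta}_U$ is a measurable function of $\{U_i\}$, hence a legitimate estimator in the class over which the left-hand infimum in \eqref{eqn:app_B1} is taken. By the hypothesis $V_i = h(U_i)$, the random variable $\tilde{\theta}_U(\{U_i\})$ under the law $f_U(\cdot;\theta)$ has the same joint distribution with $\theta$ as $\hat{\theta}_V(\{V_i\})$ under $f_V(\cdot;\theta)$, so
\begin{align}
\mathbb{E}_U\bigl[\|\tilde{\theta}_U(\{U_i\}) - \theta\|^2\bigr] = \mathbb{E}_V\bigl[\|\hat{\theta}_V(\{V_i\}) - \theta\|^2\bigr].
\end{align}

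From here the conclusion follows by two successive infima. First, since $\tilde{\theta}_U$ lies in the feasible class of $U$-estimators,
\begin{align}
\inf_{\hat{\theta}_U} \mathbb{E}_U\bigl[\|\hat{\theta}_U(\{U_i\}) - \theta\|^2\bigr] \leq \mathbb{E}_V\bigl[\|\hat{\theta}_V(\{V_i\}) - \theta\|^2\bigr].
\end{align}
Since $\hat{\theta}_V$ was arbitrary, taking the infimum of the right-hand side over all measurable $V$-estimators yields \eqref{eqn:app_B1}. Applying this to the map $h$ constructed in the main body, which sends $N$ MRA observations to a valid non-overlapping MTD observation with matching noise statistics, then gives \eqref{eqn:app_B0} and hence the lower bound $N^*_{\s{MTD}_G}(\sigma^2,\epsilon) \geq N^*_{\s{MRA}_G}(\sigma^2,\epsilon)$.

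There is no genuine obstacle in this lemma beyond bookkeeping; the only subtlety is ensuring that the classes of estimators under both infima consist of all measurable functions of the observations, so that compositions with $h$ remain admissible. This is implicit in the definition of the sample complexity in \eqref{eqn:infMtdMSE}, so the argument is complete once that convention is made explicit.
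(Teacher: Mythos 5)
Your proposal is correct and follows essentially the same route as the paper's proof: you construct the composed estimator $\hat{\theta}_V \circ h$ on the $U$-observations, note the resulting MSEs coincide since $V = h(U)$, and then take infima over all measurable $V$-estimators. This is exactly the argument in the paper (itself modeled on Theorem 2.93 of Schervish), so nothing further is needed.
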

\begin{proof}[Proof of Lemma~\ref{lemma:1}]
The proof is similar to the proof of \cite[Theorem 2.93]{ schervish2012theory}.
First, we show that for every estimator $\hat{\theta}_V$, there is an estimator $\hat{\theta}_U$, such that,
\begin{align}
     \mathbb{E}_U & \pp{ \norm{\hat{\theta}_U \p{\ppp{U_i}} - \theta}^2} 
     = \mathbb{E}_V \pp{ \norm{\hat{\theta}_V \p{\ppp{V_i}} - \theta}^2}. \label{eqn:app_B2}
\end{align}
Define $\hat{\theta}_U \p{\ppp{U_i}} = \hat{\theta}_V \p{\ppp{h \p{U_i}}}$. Then, we have,
\begin{align}
    \nonumber \mathbb{E}_U & \pp{ \norm{\hat{\theta}_U \p{\ppp{U_i}} - \theta}^2} = 
    \\ & = \mathbb{E}_U \pp{ \norm{\hat{\theta}_V \p{\ppp{h\p{U_i}}} - \theta}^2} \label{eqn:app_B3}
    \\ & = \mathbb{E}_V \pp{ \norm{\hat{\theta}_V \p{\ppp{V_i}} - \theta}^2}, \label{eqn:app_B4}
\end{align}
where \eqref{eqn:app_B3} follows from the definition of $\hat{\theta}_U \p{\ppp{U_i}} = \hat{\theta}_V \p{\ppp{h \p{U_i}}}$, and \eqref{eqn:app_B4} follows from the assumption that $V = h\p{U}$.
As \eqref{eqn:app_B2} holds for every estimator $\hat{\theta}_V$, \eqref{eqn:app_B1} follows immediately, which completes the proof.
\end{proof}

Thus, \eqref{eqn:app_B0} follows from Lemma \ref{lemma:1}. Based on the definition of sample complexity in \eqref{eqn:sampleComplexityMTD}, it then follows that,
\begin{align}
     \N^*_{\s{MTD}_G} \p{\sigma^2, \epsilon} \geq  \N^*_{\s{MRA}_G} \p{\sigma^2, \epsilon},
\end{align}
which completes the proof of the proportion.

\subsection{Proof of Proposition \ref{thm:thm1DsignalTranslations}}
First, we prove the lower bound, followed by the upper bound.
For the lower bound, define the 1-D MRA model with circular translations as follows:
\begin{equation}
    z_i = g_i \cdot x + \varepsilon_i, \quad g_i \in \mathbb{Z}_L, \label{eqn:MRAmodel1Dcyclic}
\end{equation}
where \( x \in \mathbb{R}^L \) and the values \( \{g_i\} \) are drawn uniformly from \( \mathbb{Z}_L \). The sample complexity of the MRA model defined in \eqref{eqn:MRAmodel1Dcyclic} is \( \omega \p{\sigma^6} \) \cite{abbe2018multireference}.
Then, by Proposition \ref{thm:prop2}, the associated MTD model satisfies:
\begin{align}
     \N^*_{\s{MTD}_{\mathbb{Z}_L}} \geq  \N^*_{\s{MRA}_{\mathbb{Z}_L}} = \omega \p{\sigma^6},
\end{align}
which completes the proof of the lower bound. 

For the upper bound, it has been shown that the first three orders, \( \bar{d} = 3 \), of the autocorrelations uniquely determine the orbit of $x$, assuming $x$ has a non-vanishing DFT \cite[Theorem 3.2]{bendory2023multi}.  
Therefore, following Proposition \ref{thm:prop1}, the upper bound for the sample complexity is \( \omega \p{\sigma^6} \).

\subsection{Proof of Proposition \ref{thm:thm2Drotations}}
Let us consider the MRA model associated with $\s{SO}(2)$ rotations,
\begin{align}
    z_i = g_i \cdot x + \epsilon_i, \quad g_i \in \s{SO}(2) \label{eqn:mra2Drotations}
\end{align}
where $\ppp{g_i}$ are drawn uniformly from $\s{SO}(2)$, and assuming that $x$ is a band-limited image in Fourier-Bessel expansion \cite{ma2019heterogeneous}. It was shown that a lower bound for the sample complexity of the MRA problem \eqref{eqn:mra2Drotations} is $\omega \p{\sigma^6}$, as the minimal moment required to determine uniquely the signal $x$ is the third moment. This follows from the fact that the combined power spectrum and bispectrum determine an image uniquely, up to global rotation ~\cite[Theorem II.1]{ma2019heterogeneous}, under the assumption of non-vanishing Fourier-Bessel coefficients, as elaborated below. This is in contrast with the power spectrum alone, which does not uniquely determine the image~\cite{ma2019heterogeneous}.

Formally, denote the signal $x$ in polar coordinates by $I \p{r, \theta}$. Then, $I$ can be represented in steerable basis functions in polar coordinates as follows,
\begin{align}
    I(r, \theta) = \sum_{k,q} a_{k,q} u_{k,q}(r, \theta), \label{eqn:image_expansion}    
\end{align}
where,
\begin{align}
    u_{k,q}(r, \theta) = f_{k,q}(r)e^{i k \theta}, \label{eqn:steerable_basis}
\end{align}
are the Fourier-Bessel basis functions, and $a_{k,q}$ are the Fourier-Bessel coefficients, given by,
\begin{align}
    a_{k,q} = \int  I\p{r, \theta} u_{k,q}(r, \theta) \ r dr  d\theta
\end{align}

Then, we use the following theorem \cite[Theorem II.1]{ma2019heterogeneous}:
\begin{thm}[Theorem II.1, ~\cite{ma2019heterogeneous}] \label{thm:2DmraThm}
    Consider two images with steerable basis coefficients $a_{k,q}$ and $a'_{k,q}$, respectively, in the range $-k_{\max} \leq k \leq k_{\max}$. Assume that for all $-k_{\max} \leq k \leq k_{\max}$, there exists at least one $q$ such that $a_{k,q} \neq 0$. If for all indices
    \begin{equation}
    a_{k,q_1} \overline{a_{k,q_2}} = a'_{k,q_1} \overline{a'_{k,q_2}}, \label{eq:condition1}
    \end{equation}
    \begin{equation}
    a_{k_1,q_1} a_{k_2,q_2} \overline{a_{k_1+k_2,q_3}} = a'_{k_1,q_1} a'_{k_2,q_2} \overline{a'_{k_1+k_2,q_3}}, \label{eq:condition2}
    \end{equation}
    then there exists $\theta \in [0, 2\pi)$ such that
    \begin{equation}
    a'_{k,q} = a_{k,q}e^{-\iota k\theta} \label{eq:result}
    \end{equation}
    for all $k, q$. That is, the two images only differ by a rotation.
\end{thm}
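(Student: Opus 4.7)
The plan is to exploit the two invariant conditions sequentially: first use the power spectrum identity to reduce the freedom between $a$ and $a'$ to a single phase $\phi_k$ on each angular shell $k$, and then use the bispectrum identity to force $\phi_k$ to be linear in $k$, which is precisely the action of a global rotation. The non-vanishing assumption on each shell is what allows us to divide through and propagate these constraints.

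First I would fix any $k$ in the range $[-k_{\max},k_{\max}]$ and pick, by hypothesis, an index $q_0 = q_0(k)$ with $a_{k,q_0}\neq 0$. Setting $q_1=q_2=q$ in \eqref{eq:condition1} gives $\abs{a_{k,q}}=\abs{a'_{k,q}}$, so in particular $a'_{k,q_0}\neq 0$. Then for an arbitrary $q$, the identity $a_{k,q}\overline{a_{k,q_0}}=a'_{k,q}\overline{a'_{k,q_0}}$ can be solved for $a'_{k,q}$, yielding
\begin{equation}
a'_{k,q}=\lambda_k\, a_{k,q},\qquad \lambda_k:=\frac{\overline{a_{k,q_0}}}{\overline{a'_{k,q_0}}}.
\end{equation}
The magnitude equality forces $\abs{\lambda_k}=1$, so $\lambda_k=e^{-\iota\phi_k}$ for some real $\phi_k$. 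Note this description also covers the zero coefficients trivially, and one checks $\phi_k$ is independent of the particular $q_0$ chosen. This reduces the problem to identifying the phase vector $\{\phi_k\}_{k=-k_{\max}}^{k_{\max}}$.

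Next I would plug $a'_{k,q}=e^{-\iota\phi_k}a_{k,q}$ into \eqref{eq:condition2}. For any $k_1,k_2$ with $\abs{k_1},\abs{k_2},\abs{k_1+k_2}\le k_{\max}$, the non-vanishing hypothesis lets us pick $q_1,q_2,q_3$ so that $a_{k_1,q_1}a_{k_2,q_2}\overline{a_{k_1+k_2,q_3}}\neq 0$, and dividing the two sides gives
\begin{equation}
e^{-\iota(\phi_{k_1}+\phi_{k_2}-\phi_{k_1+k_2})}=1,
\end{equation}
so $\phi_{k_1+k_2}\equiv \phi_{k_1}+\phi_{k_2}\pmod{2\pi}$ whenever all three indices lie in range. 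Taking $k_1=k_2=0$ forces $\phi_0\equiv 0$, and $k_1=k$, $k_2=-k$ forces $\phi_{-k}\equiv -\phi_k$. Then set $\theta:=\phi_1$; an easy induction on $k$ (using $\phi_{k+1}=\phi_k+\phi_1$, valid as long as $k+1\le k_{\max}$) yields $\phi_k\equiv k\theta$ for all $k$ in the range. Substituting back gives $a'_{k,q}=a_{k,q}e^{-\iota k\theta}$, which is exactly \eqref{eq:result}, i.e.\ the two images differ by the rotation $\theta$ acting on the steerable basis \eqref{eqn:steerable_basis}.

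The main obstacle is essentially bookkeeping: one must verify that the additive constraints on $\phi_k$ available from \eqref{eq:condition2} suffice to pin down all phases $\phi_{-k_{\max}},\dots,\phi_{k_{\max}}$ as multiples of $\phi_1$, despite the band-limit cutoff that prevents using $k_1+k_2$ outside $[-k_{\max},k_{\max}]$. The induction $\phi_{k+1}=\phi_k+\phi_1$ stays within range and handles this cleanly, as does its mirror for negative $k$; the non-vanishing-per-shell hypothesis is exactly what guarantees each such step has a witness triple $(q_1,q_2,q_3)$ that makes the bispectrum entry nonzero and hence the phase relation non-trivial. No other subtlety arises, so the argument is completed by combining these two reductions.
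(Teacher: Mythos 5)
Your proof is correct, and it follows the same two-stage argument as the source: the paper itself does not prove this theorem but imports it verbatim from \cite{ma2019heterogeneous}, whose proof likewise uses the power-spectrum relations to reduce the ambiguity to a unimodular factor $e^{-\iota\phi_k}$ on each angular frequency $k$ (well defined by the per-shell non-vanishing assumption), and then uses the bispectrum relations to show $\phi_{k_1+k_2}\equiv\phi_{k_1}+\phi_{k_2}$, hence $\phi_k\equiv k\phi_1$ by the induction you describe. Your handling of the band-limit bookkeeping (staying inside $[-k_{\max},k_{\max}]$ via $\phi_{k+1}=\phi_k+\phi_1$ and $\phi_{-k}=-\phi_k$) is sound, so no gaps remain.
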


Thus, applying Proposition \ref{thm:prop1} and Theorem \ref{thm:2DmraThm} proves the lower bound,
\begin{align}
     \N^*_{\s{MTD}_{\s{SO}(2)}} \geq   \N^*_{\s{MRA}_{\s{SO}(2)}} = \omega \p{\sigma^6}.
\end{align}

\subsection{Proof of Proposition \ref{thm:thm1Dsignal}}
It has been proved that the first three orders of autocorrelations (\( \bar{d} = 3 \)) uniquely determine a generic signal \( x \), which has non-vanishing edge entries, i.e., $x\pp{0}, x\pp{L-1} \neq 0, $ ~\cite[Proposition 4.1]{bendory2019multi}. Therefore, following Proposition \ref{thm:prop1}, the upper bound of the sample complexity is \( \omega \p{\sigma^6} \).


\end{document}